\def\R{\mathbb{R}}
\def\gnd{\ensuremath{\mathsf{GND}}\xspace}
\def\opt{\ensuremath{\mathsf{OPT}}\xspace}
\def\P{\mathcal{P}}
\def\sse{\subseteq}
\def\d{\psi}
\def\e{\mathsf{e}}
\newtheorem{theorem}{Theorem}
\newtheorem{cl}{Claim}
\newtheorem{lemma}{Lemma}
\newcommand\ignore[1]{}
\title{Online Generalized Network Design Under (Dis)Economies of Scale}
\author{Viswanath Nagarajan\thanks{Industrial and Operations Engineering Department, University of Michigan. Email: \{viswa,lilyxy\}@umich.edu. Supported in part by NSF grants CCF-1750127 and CMMI-1940766. } \and Lily Wang$^*$}
\begin{document}
\maketitle

\begin{abstract}
We consider a general online network design problem where a sequence of $N$ {\em requests} arrive over time, each of which needs to use some subset of the available {\em resources} $E$. The cost incurred by a resource $e\in E$ is some function $f_e$ of the total load $\ell_e$ on that resource. 
The objective is to minimize the total cost $\sum_{e\in E} f_e(\ell_e)$. We focus on cost functions that exhibit  (dis)economies of scale, that are  of the form $f_e(x) = \sigma_e + \xi_e\cdot x^{\alpha_e}$ if $x>0$ (and zero if $x=0$), 
 where  the exponent $\alpha_e\ge 1$. 
Optimization problems under these functions have received significant recent attention 
 due to applications in energy-efficient computing.   Our main result is a  deterministic online algorithm with tight competitive ratio $\Theta\left(\max_{e\in E} \left(\frac{\sigma_e}{\xi_e}\right)^{1/\alpha_e}\right)$ when $\alpha_e$ is constant for all $e\in E$. This framework is applicable to a variety of network design problems in undirected and directed graphs, including multicommodity routing, Steiner tree/forest connectivity and set-connectivity. In fact, our online competitive ratio even matches the previous-best (offline) approximation ratio for generalized network design. 

\end{abstract}

\setcounter{page}{1}

\section{Introduction}
Network design  problems (involving selecting a subgraph with certain connectivity properties)  are of significant practical and theoretical interest. A classic setting in network design is as follows. There are several requests that need to be routed through a network, where each resource $e$ has a non-decreasing cost-function  $f_e$  that determines the cost $f_e(\ell_e)$ incurred at $e$ as a function of its load $\ell_e$. The objective is to minimize the overall cost $\sum_e f_e(\ell_e)$.

Traditional network design models involve {\em concave} cost-functions. These are cost functions  that exhibit ``economies of scale'', i.e.,  a larger load results in a smaller cost-per-unit-load. This is the setting in  buy-at-bulk network design, that has been studied extensively in approximation and online algorithms~\cite{AA97,ChekuriHKS10,ChakrabartyEKP18}. The most basic problems in this setting are Steiner tree and forest~\cite{AgrawalKR95,GoemansW95}. 

Recent applications in energy-efficient scheduling and routing have motivated the study of cost-functions with ``diseconomies of scale'' \cite{AndrewsAZZ12,MakarychevS18}. Here, larger load results in a larger cost-per-unit-load. These functions capture the energy consumption of network resources that are {\em speed scalable} and adjust their speed in proportion to their load.
The energy consumed at speed/load $x$ grows super-linearly as $x^\alpha$ where the exponent $\alpha>1$. For most technologies, exponent $\alpha$ lies between $1$ and $3$ \cite{AndrewsAZZ12,WiermanAT12}. 

As discussed in \cite{AndrewsAZZ12}, a more accurate model for  energy consumption involves a start-up cost in addition to the  super-linear $x^\alpha$ term. This leads to the cost function:
\begin{equation}\label{eq:energy}
f_e(x) = \left\{ \begin{array}{ll}
0 & \mbox{ if } x=0\\
\sigma_e + \xi_e\cdot x^{\alpha_e} & \mbox{ if } x>0
\end{array}\right.,
\end{equation} 
where the parameters $\sigma_e, \xi_e\ge 0$ and  $\alpha_e\ge 1$ depend on the particular  device (resource). The first term $\sigma_e$ represents the cost incurred in simply keeping the device powered-on but idle and the second term $\xi_e\cdot x^{\alpha_e}$ represents the cost incurred due to speed-scaling. These cost functions exhibit {\em both} economies and diseconomies of scale. Indeed, they appear concave for small values of the load $x$ and convex for large values of the load. So these functions 
are said to exhibit {\em (dis)economies of scale}.  A major challenge in designing algorithms for such cost-functions is that one needs the balance two opposing goals (1) aggregating demands in the concave regime and (2) separating demands in the convex regime. Prior work~\cite{AndrewsAZ16,AntoniadisIKMNP20,KrishnaswamyNPS14} has mainly focused on the special case of {\em uniform} (or related) cost functions where  the $\alpha_e$s and  $\frac{\sigma_e}{\xi_e}$s are uniform across  all resources $e$.

Recently, \cite{EmekKLS20} studied a large class of {\em generalized network design} problems under cost functions of the form~\eqref{eq:energy}, which included routing requests, Steiner tree/forest connectivity and set-connectivity in undirected and directed graphs. The main result in \cite{EmekKLS20} was a  unified approximation framework that provided an  
$O\left(\max_{e} \left(\frac{\sigma_e}{\xi_e}\right)^{1/\alpha_e}\right)$ approximation algorithm assuming only a ``minimum cost oracle'' that can satisfy a {\em single} request at minimum cost.

In this paper, we consider the same class of generalized network design (\gnd) problems as \cite{EmekKLS20}, but in the {\em online setting}. Here, requests arrive over time and each request needs to be (irrevocably) assigned to some resources immediately upon arrival. Our main result is a deterministic online algorithm with competitive ratio $O\left(\max_{e} \left(\frac{\sigma_e}{\xi_e}\right)^{1/\alpha_e}\right)$, which even matches the best approximation ratio known for \gnd.  We also show that no deterministic online algorithm can do better (up to a constant factor).

\subsection{Problem Definition}\label{subsec:prelim}
In the generalized network design (\gnd) problem, we have a set  $E$ of resources and $N$ requests that use these resources. Each request $i\in [N]$ is associated with:
\begin{itemize}
\item a collection $\P_i\sse 2^E$ of ``replies'' where  an algorithm needs to choose some $p_i\in \P_i$ in order to satisfy request $i$. The reply collections may be specified implicitly. 
\item a weight vector $w_i \in \R_{\ge 1}^E$ where request $i$ induces a load of $w_{i,e}$  on each resource $e$ that it uses.  
Note that the  weights on different resources may be unrelated. (The requirement that weights/demands of requests are at least one is common to  all prior work.)
\end{itemize}
Each resource $e\in E$ is associated with an individual cost function $f_e:\R\rightarrow \R$ of the form~\eqref{eq:energy}. We will refer to such functions as {\em (D)oS functions}. 
We emphasize that the parameters $\sigma_e$, $\xi_e$ and $\alpha_e$ may be different across resources. So we can handle networks with heterogenous resources (for example, routers running on different technologies).

A solution is just a choice of reply $p_i\in \P_i$ for each request $i\in [N]$. Then, the load on each resource $e\in E$ is $\ell_e = \sum_{i: e\in p_i} w_{i,e}$. The objective is to minimize the total cost 
 $\sum_{e\in E} f_e(\ell_e)$. 

In the online setting, the requests $i\in [N]$ arrive over time, and the algorithm should choose a reply $p_i\in \P_i$ for each request $i$ immediately upon arrival (which cannot be changed later). As usual, we use competitive analysis to measure the performance of an online algorithm, which is relative  to the offline optimum that knows the entire request sequence upfront.
 
We use $m:=|E|$ to denote the number of resources. For each resource $e\in E$, define $q_e : = (\sigma_e/\xi_e)^{1/\alpha_e}$. Note that $q_e$ is the value of load $x$ at which the two terms $\sigma_e$ and $\xi_e \cdot x^{\alpha_e}$ in the (D)oS cost function $f_e(x)$  become equal. Let $q:=\max_{e\in E} q_e$. Also, let $\alpha:= \max_{e\in E} \alpha_e$ denote the maximum exponent in the (D)oS functions.

\paragraph{Min-cost Oracle} We will assume that the reply-collections $\P_i$ are such that one can find an approximately  min-cost reply efficiently. Formally, we assume that there is a $\tau$-approximation algorithm for the problem $\min_{p\in \P_i} \sum_{e\in p} d_e$ for any request $i\in [N]$ and any scalars $\{d_e\ge 0\}_{e\in E}$. If computational complexity is not a consideration (which is sometimes the case with online algorithms) then this assumption is satisfied trivially with $\tau=1$.

\smallskip \noindent {\bf Example 1 (multicommodity routing).}  
The  resources $E$ are edges in some directed graph $G=(V,E)$. Each request $i\in [N]$ consists of a source $s_i\in V$, destination $t_i\in V$ and demand $d_i\ge 1$. For each $i\in [N]$, the reply-collection $\P_i$ consists of all $s_i-t_i$ paths in $G$, and the weights $w_{i,e}=d_i$ for all $e\in E$. The resulting \gnd instance corresponds to selecting an $s_i-t_i$ routing path carrying $d_i$ units of flow (for each request $i$), so as to minimize the total energy cost of the routing. The min-cost oracle in this case corresponds to the shortest path problem in directed graphs, which admits an exact algorithm: so $\tau=1$.

\smallskip \noindent {\bf Example 2 (set connectivity and set-strong-connectivity).} 
The  resources $E$ are edges in some undirected (resp. directed) graph $G=(V,E)$. Each request $i\in [N]$ consists of a subset $T_i\sse V$ of nodes and demand $d_i\ge 1$. The  reply-collection $\P_i$ consists of all edge-subsets that induce  a connected (resp. strongly connected) subgraph containing $T_i$. The  weights $w_{i,e}=d_i$ for all $e\in E$.  The resulting \gnd instance corresponds to selecting an overlay network for each terminal-set $T_i$ that can support $d_i$ units of flow. The min-cost oracle for the undirected case corresponds to the Steiner  tree problem: so we have $\tau=1.39$~\cite{ByrkaGRS13}. In the directed case, the oracle is the strongly connected Steiner subgraph problem, for which  we have (i) $\tau=k^\epsilon$ for any constant $\epsilon>0$ in polynomial time~\cite{CharikarCCDGGL99} or (ii) $\tau=O(\frac{\log^2k}{\log\log k})$ in {\em quasi-polynomial} time~\cite{GLL19,GhugeN20}. Here $k=\max_i |T_i|$ is the maximum number of terminals in any request. 

\subsection{Our Results and Techniques}\label{subsec:results}
Our main result is the following:
\begin{theorem}\label{thm:main-1}
There is a polynomial time $O(q\tau + (\e \alpha\tau)^\alpha)$-competitive deterministic online algorithm for \gnd assuming a $\tau$-approximation algorithm for the min-cost oracle. 
\end{theorem}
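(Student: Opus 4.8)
The plan is to run the classical \emph{marginal-cost routing} algorithm through the given oracle. Maintain the current loads $\{\ell_e\}_{e\in E}$, initially all $0$; when request $i$ arrives, give each resource $e$ a \emph{length} equal to its current marginal cost, $d^{(i)}_e := f_e(\ell_e+w_{i,e})-f_e(\ell_e)$, call the $\tau$-approximation algorithm for $\min_{p\in\P_i}\sum_{e\in p} d^{(i)}_e$ to obtain a reply $p_i\in\P_i$, commit to it, and update the loads. Each marginal is nonnegative ($f_e$ is nondecreasing on $(0,\infty)$ and $f_e(0)=0$), so fixing an optimal solution with replies $p^\star_i\in\P_i$ and loads $\ell^\star_e$, the oracle returns $p_i$ with $\sum_{e\in p_i}d^{(i)}_e\le\tau\sum_{e\in p^\star_i}d^{(i)}_e$. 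Since only requests routed through $e$ change $\ell_e$, the marginals at $e$ telescope exactly to $f_e(\ell^A_e)$ ($\ell^A_e$ = final load), so the algorithm's cost satisfies $\mathsf{ALG}=\sum_e f_e(\ell^A_e)=\sum_i\sum_{e\in p_i}d^{(i)}_e\le\tau\sum_i\sum_{e\in p^\star_i}d^{(i)}_e=\tau\sum_{e\in E}\sum_{i:\,e\in p^\star_i}d^{(i)}_e$.

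Next I would split the right-hand side. Writing $\ell^{(i)}_e$ for the algorithm's load on $e$ when request $i$ arrives, we have $f_e(\ell+w)-f_e(\ell)=\sigma_e\,\mathbf{1}[\ell=0]+\xi_e\big((\ell+w)^{\alpha_e}-\ell^{\alpha_e}\big)$, which splits the bound into a \emph{start-up} part $\sum_e\sigma_e\,n_e$, with $n_e:=|\{i:\,e\in p^\star_i,\ \ell^{(i)}_e=0\}|$, and a \emph{speed-scaling} part $\sum_e\sum_{i:\,e\in p^\star_i}\xi_e\big((\ell^{(i)}_e+w_{i,e})^{\alpha_e}-(\ell^{(i)}_e)^{\alpha_e}\big)$. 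For the start-up part: $\ell^{(i)}_e=0$ precisely for requests arriving before the algorithm first uses $e$, which is a prefix of the sequence; since every weight is at least $1$, the $n_e$ requests counted each add $\ge 1$ to the optimum's load on $e$, so $\ell^\star_e\ge n_e$ and $f_e(\ell^\star_e)\ge\max\{\sigma_e,\ \xi_e n_e^{\alpha_e}\}$. A short case analysis (if $n_e\le q$ use $f_e(\ell^\star_e)\ge\sigma_e$; if $n_e>q\ge q_e$ use $f_e(\ell^\star_e)\ge\xi_e n_e^{\alpha_e}$ together with $\sigma_e=\xi_e q_e^{\alpha_e}$ and $q_e^{\alpha_e}/n_e^{\alpha_e-1}\le q_e$) gives $\sigma_e n_e\le q\cdot f_e(\ell^\star_e)$ in all cases, so the start-up part contributes at most $q\tau\cdot\opt$ --- the first term of the ratio.

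For the speed-scaling part the crucial move is to push the oracle loss $\tau$ \emph{inside} the $\alpha_e$-th power. By convexity of $x\mapsto x^{\alpha_e}$, for $\tau\ge 1$ one has $\tau\big((\ell+w)^{\alpha_e}-\ell^{\alpha_e}\big)\le(\ell+\tau w)^{\alpha_e}-\ell^{\alpha_e}$, so $\tau$ times the speed-scaling part is at most $\sum_e\sum_{i:\,e\in p^\star_i}\xi_e\big((\ell^{(i)}_e+\tau w_{i,e})^{\alpha_e}-(\ell^{(i)}_e)^{\alpha_e}\big)$; and again by convexity, adding increments $\tau w_{i,e}$ on top of bases $\ell^{(i)}_e\le\ell^A_e$ is dominated by adding their sum $\tau\ell^\star_e$ on top of $\ell^A_e$ and telescoping, giving at most $\sum_e\xi_e\big((\ell^A_e+\tau\ell^\star_e)^{\alpha_e}-(\ell^A_e)^{\alpha_e}\big)$. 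Applying the convexity inequality $(a+b)^{\alpha}-a^{\alpha}\le\tfrac12 a^{\alpha}+(\e\alpha)^{\alpha}b^{\alpha}$ (valid for $a,b\ge0$, $\alpha\ge 1$, by optimizing the split point) then bounds this by $\tfrac12\sum_e\xi_e(\ell^A_e)^{\alpha_e}+(\e\alpha\tau)^{\alpha}\sum_e\xi_e(\ell^\star_e)^{\alpha_e}\le\tfrac12\,\mathsf{ALG}+(\e\alpha\tau)^{\alpha}\,\opt$. Combining the two parts yields $\mathsf{ALG}\le\tfrac12\,\mathsf{ALG}+\big(q\tau+(\e\alpha\tau)^{\alpha}\big)\opt$, and rearranging proves the bound; the algorithm makes one oracle call per request, hence runs in polynomial time.

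I expect the main obstacle to be controlling exactly where $\tau$ and the exponents $\alpha_e$ land in the speed-scaling part. A naive argument --- linearizing each marginal at the current load, or crudely splitting $(\ell^A_e+\tau\ell^\star_e)^{\alpha_e}$ into $(\ell^A_e)^{\alpha_e}$ and $(\tau\ell^\star_e)^{\alpha_e}$ with a $2^{\alpha_e}$ loss and then applying Young's inequality --- produces bounds of the form $\tau\cdot(\alpha\tau)^{\alpha}$ or even $(\alpha\tau)^{\Theta(\alpha^2)}$, far from the claimed $(\e\alpha\tau)^{\alpha}$. The two convexity maneuvers above (replacing ``$\tau$ times the marginal for weight $w$'' by ``the marginal for weight $\tau w$'' \emph{before} telescoping, then invoking the sharp $(a+b)^{\alpha}$ inequality with the correct split parameter) are precisely what keep both $\tau$ and $\alpha$ inside a single $\alpha$-th power with base $\e\alpha\tau$. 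A secondary subtlety is that weights may be large and heterogeneous across resources, so every telescoping step must be justified via convexity of $x^{\alpha_e}$ rather than by linearization at a fixed load.
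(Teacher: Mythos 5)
Your proof is correct, and it takes a genuinely different route from the paper's. Here is a comparison.

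\textbf{What you do differently.} The paper first reduces (D)oS functions $f_e$ to weighted power functions $h_e(x)=\xi_e q_e^{\alpha_e-1}x + \xi_e x^{\alpha_e}$ (Lemma~\ref{lem:fn-apx}, losing a $2q$ factor), then writes a convex covering relaxation $(P)$ for the power-function \gnd problem and analyzes a greedy algorithm (with modified marginal costs $\d_e = \alpha_e c_e\ell_e^{\alpha_e-1}w_{i,e}+\frac{\rho}{\e^\alpha}c_e\alpha_e w_{i,e}^{\alpha_e}$) via dual fitting against $(D)$. You instead run the greedy directly on the \emph{exact} (D)oS marginal $d^{(i)}_e=f_e(\ell_e+w_{i,e})-f_e(\ell_e)$ and give a purely primal argument: telescoping to identify $\mathsf{ALG}$ with the total marginal, greediness to charge against $\opt$'s replies, a split into the start-up and speed-scaling parts, and two convexity maneuvers (pushing $\tau$ inside the $\alpha$-th power via $\tau((\ell+w)^\alpha-\ell^\alpha)\le(\ell+\tau w)^\alpha-\ell^\alpha$, then telescoping on top of $\ell^A_e$ and invoking $(a+b)^\alpha-a^\alpha\le\frac12 a^\alpha+(\e\alpha)^\alpha b^\alpha$, which I verified holds for all $\alpha\ge1$ by checking that $(1+b^*)^{\alpha-1}\le 3/2$ at the critical point $b^*=1/((\e\alpha)^{\alpha/(\alpha-1)}-1)$). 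Your $q$ dependence comes from a separate charging argument on the start-up terms ($\sigma_e n_e\le q\, f_e(\ell^\star_e)$ by a two-case analysis), rather than from a lossy function-approximation lemma.

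\textbf{What each approach buys.} Your argument is more elementary --- it avoids the reduction to power functions, the convex-program derivation in Appendix~\ref{app:derive-dual}, and the dual-fitting machinery --- and shows that the natural exact-marginal greedy already works (the paper's algorithm instead uses a derivative-based surrogate with an extra $\rho$-scaled term). It is also notable that your argument handles non-uniform $\alpha_e$ by pure telescoping/convexity, a direction the paper suggests (\S\ref{subsec:related}) is hard for the \cite{AwerbuchAGKKV95}-style potential approach. On the other hand, the paper's primal-dual route also yields a fractional online algorithm for the relaxation (Theorem~\ref{thm:frac-online}), makes the stronger split guarantee $L\le 2\tau\cdot\opt$ and $H\le 2(\e\alpha\tau)^\alpha\cdot\opt$ explicit (Theorem~\ref{thm:apx-online}), and slots into the broader online-primal-dual literature the paper is situating itself in. One small gap in your write-up: the inequality $(a+b)^\alpha-a^\alpha\le\frac12 a^\alpha+(\e\alpha)^\alpha b^\alpha$ is asserted rather than proved, and the naive convexity split $(a+b)^\alpha\le(1-\lambda)^{1-\alpha}a^\alpha+\lambda^{1-\alpha}b^\alpha$ with $(1-\lambda)^{1-\alpha}=3/2$ gives base $\approx 2.47\alpha$ rather than $\e\alpha$, so one should either note this is still $\le(\e\alpha)^\alpha$ or carry out the sharper calculation at the critical point; the stated bound is true but that detail deserves a line.
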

Above, $\e\approx 2.718$ is the base of the natural logarithm. The running time of this algorithm is $O(Nm + N\cdot \Phi(m))$ where $\Phi(m)$ is the time taken by the min-cost oracle. Note that when $\tau=1$, we obtain a competitive ratio of $O(q+(\e\alpha)^\alpha)$. 

To the best of our knowledge, previous online algorithms for \gnd were restricted to the case of multicommodity routing in undirected graphs with uniform edge-cost functions~\cite{AntoniadisIKMNP20}. Our result provides a unified framework to address various types of requests (including Steiner and set-connectivity) in both undirected and directed graphs. Moreover, this is the first competitive ratio (even in the previously-studied setting \cite{AntoniadisIKMNP20}) that does not grow with the network size or the number of requests. Finally, our 
result also applies to {\em non-uniform} cost functions: in this  setting, no online algorithm was known even for single-commodity routing with edge costs.  

 As noted earlier, our competitive ratio matches the $O_\alpha(q\tau + \tau^\alpha)$ approximation algorithm for \gnd obtained in \cite{EmekKLS20}.\footnote{The $O_\alpha$ notation treats $\alpha$ as constant and suppresses   factors that depend on $\alpha$.} Even when used in the offline setting, our algorithm has several advantages. First, the dependence on $\alpha$ in the approximation ratio is better: we obtain a factor of $(\e \alpha)^\alpha=\e^{\alpha (1+\ln \alpha)}$ whereas the previous algorithm had a $3^{\alpha^2}$ factor \cite{EmekKLS20-comm}. Second, our algorithm is deterministic whereas the previous algorithm was randomized. Third, our running time is  better. Fourth, our algorithm itself is very simple and (arguably) simpler to analyze. 
 
To prove  Theorem~\ref{thm:main-1}, we first show that any (D)oS function $f_e(x)$ of the form \eqref{eq:energy} can be well-approximated by a weighted sum of power functions of form $h_e(x)=\eta_e\cdot x + \xi_e\cdot x^{\alpha_e}$. This reduction loses a factor of $2(\sigma_e/\xi_e)^{1/\alpha_e}$ in the objective. This allows us to  then focus on the \gnd problem under (non-uniform) power cost functions, which is a convex objective. 

For \gnd under power cost functions,
 if we were only interested in an offline approximation algorithm, we could use the approach in \cite{MakarychevS18} that was based on a convex relaxation and rounding to obtain an $A_\alpha$-approximation algorithm for \gnd (assuming $\tau=1$). Here, $A_\alpha \approx (\frac{\alpha}{\ln(1+ \alpha)})^\alpha$ is the fractional Bell number. This approach however does not work in the online setting. Instead, we use a more direct approach  motivated by work on online load balancing with $\ell_p$-norms~\cite{AwerbuchAGKKV95}. For each request $i$, our algorithm basically selects the reply in $\P_i$ that results in the smallest increase in the objective. (The actual algorithm involves tracking a modified objective function.) We analyze our algorithm using the online primal-dual method for convex programs. The idea is to (1) write a convex relaxation for \gnd and its dual, and (2) upper bound the (integral) primal objective by some factor $\rho$ times the dual objective. By weak duality, we then obtain a competitive ratio of $\rho$. 
 
 There have been a number of recent papers using the online primal-dual approach for convex programs (see \S\ref{subsec:related} for more details). The work closest to ours is 
  \cite{GuptaKP12}, where an $O(\alpha)^\alpha$-competitive algorithm was obtained for the special case of \gnd with uniform $\alpha$ power cost functions and multicommodity routing requests. Our approach is more general as it can handle a much wider class of requests and non-uniform $\alpha_e$ powers. From a technical perspective, while our primal convex program is the natural extension of that in \cite{GuptaKP12} (for multicommodity routing),  we use a different (re)formulation of the dual program and also set dual variables differently. Our dual formulation is easier to reason about, and hence allows for a clean analysis even in  more general settings. 
 
Implementing the above approach directly leads to an $O(q(\e \alpha \tau)^\alpha)$-competitive algorithm for \gnd using a $\tau$-approximate min-cost oracle. To obtain the more refined guarantee in Theorem~\ref{thm:main-1}, we improve both steps above. In the reduction from (D)oS functions $f_e(x)$ to power functions $h_e(x)$,  we show that the factor $q_e$ loss only affects the linear term in $h_e(x)$.  Then, in the online algorithm for \gnd under power functions, we show that the greedy objective can be further modified to ensure a stronger $O(\tau)$ competitive ratio for the linear terms, while the non-linear terms incur an $O((\e \alpha \tau)^\alpha)$ competitive ratio.

We also provide a nearly matching lower bound for online \gnd:
\begin{theorem}\label{thm:main-2}
Every deterministic online algorithm for \gnd has competitive ratio $\Omega\left( q +(1.44 \alpha)^\alpha\right)$.
\end{theorem}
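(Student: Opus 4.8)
The plan is to prove the two additive pieces of the bound by two separate \gnd instances and then glue them together. Concretely, let $I_1$ be an instance witnessing an $\Omega(q)$ lower bound — its resources all have $\sigma_e,\xi_e>0$ with threshold $q_e=q$ (I will take exponent $1$ there) — and $I_2$ an instance witnessing an $\Omega((1.44\alpha)^\alpha)$ lower bound, with all resources having $\sigma_e=0$, $\xi_e=1$, $\alpha_e=\alpha$ (pure power cost $x^\alpha$, so $q_e=0$ there). Scale every $\sigma_e,\xi_e$ in $I_1$ by a common factor so that $\opt(I_1)=\opt(I_2)$; this changes neither the ratios $\sigma_e/\xi_e$ (hence the $q_e$'s) nor the exponents. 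In the combined instance take the disjoint union of the two resource sets and present the requests of $I_1$ followed by those of $I_2$. Then $\opt=\opt(I_1)+\opt(I_2)=2\opt(I_2)$, while $\max_e q_e=q$ and $\max_e\alpha_e=\alpha$, as required. For any deterministic online algorithm, its decisions on the $I_1$-requests constitute a deterministic online algorithm for $I_1$, and (since the resource sets are disjoint, so $I_2$'s cost and $I_2$'s adaptive adversary depend only on the $I_2$-phase) its decisions on the $I_2$-requests constitute one for $I_2$; hence its total cost is at least $\Omega(q)\cdot\opt(I_1)+\Omega((1.44\alpha)^\alpha)\cdot\opt(I_2)=\Omega\!\left(q+(1.44\alpha)^\alpha\right)\cdot\opt$.

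For $I_1$ I would use a ``shrinking candidate set'' argument (it uses only abstract reply collections, so it is a legal \gnd instance). Take $q+1$ identical resources with $f_e(x)=\sigma+\xi x$ where $\sigma=\xi q$, so $q_e=q$, $f_e(1)=(q+1)\xi$ and $f_e(q)=2q\xi$. There are $q$ requests, each of demand $1$. The adversary maintains a candidate set $C_{j-1}$ of resources, with $C_0$ the full set; request $j$ is given the reply collection $\{\{e\}:e\in C_{j-1}\}$, and after the online algorithm commits request $j$ to some $e\in C_{j-1}$ the adversary sets $C_j:=C_{j-1}\setminus\{e\}$. After all $q$ requests $|C_q|=1$; let $e^*$ be the surviving resource. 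The online algorithm necessarily used $q$ \emph{distinct} resources (its step-$j$ choice lies in no later candidate set), each with load $1$, for cost $q\cdot f_e(1)\ge q\sigma$; but $e^*$ lies in every reply collection, so routing all $q$ requests through $e^*$ is feasible with cost $f_e(q)=2\sigma$, whence $\opt(I_1)\le 2\sigma$ and the competitive ratio is at least $q/2$. (The same works with exponent $\alpha$ in place of $1$, so $I_1$ can also share the exponent.)

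For $I_2$ (multicommodity routing, or again abstract \gnd, with pure power costs $x^\alpha$) I would use an adaptive, phased restriction of the resources. The adversary maintains a pool $C_t$ with $|C_t|=|C_{t-1}|/2$; in phase $t$ it confines a batch $B_t$ of unit demands to the pool $C_{t-1}$, lets the online algorithm route the batch, and then retains the currently \emph{heaviest} half of $C_{t-1}$ as $C_t$. Whatever the algorithm does, the resources surviving all $T$ phases absorb load from every phase, so choosing $|B_t|\propto|C_{t-1}|$ (constant per-phase load increase) leaves each of the $|C_0|/2^{T}$ survivors with load $\Omega(T)$ and hence online cost $\gtrsim (|C_0|/2^{T})\cdot T^{\alpha}$. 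The optimum, which sees the entire (now fixed) sequence, instead routes the phase-$t$ batch onto the half $C_{t-1}\setminus C_t$ that is about to be discarded, keeping every load $O(1)$ and incurring cost $O(|C_0|)$. The ratio is thus $\gtrsim T^{\alpha}/2^{T}$ up to lower-order factors, and optimizing over the number of phases gives the stationary point $T=\alpha/\ln 2$; a careful tuning of the shrink rate and batch sizes, together with a tight accounting of both costs, produces the base $1.44=1/\ln 2$, i.e.\ competitive ratio $\Omega\!\left((1.44\alpha)^{\alpha}\right)$.

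The $\Omega(q)$ part is essentially self-contained. I expect the $\Omega((1.44\alpha)^\alpha)$ part to be the main obstacle: one must (i) argue that \emph{every} online strategy — not just the greedy one that spreads each batch — is forced into $\Omega(\alpha)$-fold overload on the survivors (retaining the heavier half is what makes this robust), (ii) bound the optimum's cost tightly from above, and (iii) tune the construction so that the phase-count optimization lands exactly on $1/\ln 2$ rather than a weaker constant. If in addition one wants the lower bound to hold already for genuine multicommodity routing, one also has to build a graph in which ``splitting'' flow is possible only at the coarse granularity dictated by the pools $C_t$.
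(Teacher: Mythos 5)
Your $\Omega(q)$ construction is correct and, in fact, takes a cleaner route than the paper: the paper builds a directed graph (a complete binary tree of depth $q$ hanging off a source, with an adaptively chosen root-to-leaf path of sinks), which has the advantage of proving the lower bound already for single-commodity routing, whereas your shrinking-candidate-set instance is an abstract \gnd instance in which the adversary simply restricts the reply collections. Both are valid, since the theorem is stated for general \gnd; your version is simpler to verify, the paper's version is stronger (it holds for a natural, structured special case). Your ``glue the two instances together on disjoint resource sets'' step is also fine and is slightly more careful than what the paper does explicitly -- the paper gives the two $\Omega(q)$ and $\Omega((1.44\alpha)^\alpha)$ bounds on separate instances and implicitly relies on $a+b \le 2\max\{a,b\}$ (plus the observation that one can always add an unused dummy resource carrying the missing $q_e$ or $\alpha_e$), which gives the stated asymptotic bound.

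The genuine gap is in the $\Omega((1.44\alpha)^\alpha)$ part. You sketch an adaptive ``phased halving'' adversary with a stationary-point calculation suggesting $T=\alpha/\ln 2$ phases and conclude that ``a careful tuning \dots produces the base $1.44$,'' but you never carry this out, and you yourself flag it as the main obstacle. In particular, points (i)--(iii) you list (forcing overload against \emph{every} online strategy, tightly upper-bounding \opt, and landing on $1/\ln 2$ rather than a weaker constant) are exactly the hard technical content, and none of them is established. The paper does not reprove this bound at all: it observes that \gnd with $\sigma_e=0$, unit weights, and singleton reply collections (encoded as a two-layer directed graph from a common source) \emph{is} the online restricted-assignment scheduling problem with $\ell_p$-norm objective, for which the $\Omega((1.44\alpha)^\alpha)$ lower bound is a known result of Caragiannis (2008). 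You should either cite that result (after giving the same reduction) or actually complete the phased argument; as written, this half of the theorem is not proved.
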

As usual with online lower bounds, this is information-theoretic and independent of computational requirements. So this nearly matches the $O(q+(\e \alpha)^\alpha)$ competitive ratio from Theorem~\ref{thm:main-1} when $\tau=1$. The lower bound instance involves single-commodity routing requests in directed graphs. The $\Omega(q)$ part of the lower bound relies on a construction similar to the online directed Steiner tree lower bound~\cite{FaloutsosPS02}. The $\Omega((1.44 \alpha)^\alpha)$ part of the lower bound follows from the corresponding result for online load balancing  with $\alpha^{th}$ power of loads~\cite{Caragiannis08}. 

Finally, we can also extend our main result to a larger class of functions called {\em real exponent polynomials} (REP) that were studied in \cite{EmekKLS20}. These have the form
\begin{equation}\label{eq:REP}
\bar{f}_e(x) = \left\{ \begin{array}{ll}
0 & \mbox{ if } x=0\\
\sigma_e + \sum_{j=1}^q \xi_{e,j} \cdot x^{\alpha_{e,j}} & \mbox{ if } x>0
\end{array}\right.,
\end{equation} 
where the parameters $\sigma_e, \xi_{e,1},\cdots \xi_{e,q}\ge 0$ and the exponents $\alpha_{e,j}\ge 1$. 
\begin{theorem}\label{thm:main-3}
There is a polynomial time $O(Q\tau + (\e \alpha\tau)^\alpha)$-competitive deterministic online algorithm for \gnd under REP cost functions assuming a $\tau$-approximation algorithm for the min-cost oracle. Here $Q=\max_{e\in E} \min_{j\in [q]}  (\sigma_e/\xi_{e,j})^{1/\alpha_{e,j}}$. 
\end{theorem}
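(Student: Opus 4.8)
The plan is to follow the two‑phase strategy of Theorem~\ref{thm:main-1}, prefixed by one extra reduction: every REP cost function is first replaced by a weighted sum of pure power functions in which the \emph{entire} $Q$‑factor loss is confined to a single linear term, and then the online primal–dual algorithm for power‑function \gnd is applied. For this first phase, fix a resource $e$, let $j^\star\in\argmin_{j\in[q]}(\sigma_e/\xi_{e,j})^{1/\alpha_{e,j}}$, and set $q_e^\star:=(\sigma_e/\xi_{e,j^\star})^{1/\alpha_{e,j^\star}}\le Q$ by definition of $Q$. The one‑variable estimate underlying the (D)oS‑to‑power reduction of Theorem~\ref{thm:main-1}, applied to the single term $\sigma_e+\xi_{e,j^\star}x^{\alpha_{e,j^\star}}$, yields $\sigma_e x\le q_e^\star(\sigma_e+\xi_{e,j^\star}x^{\alpha_{e,j^\star}})\le Q\cdot\bar f_e(x)$ for every $x\ge 1$, where the last step just discards the remaining nonnegative power terms of $\bar f_e$. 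Defining $\bar h_e(x):=\sigma_e x+\sum_{j=1}^q\xi_{e,j}x^{\alpha_{e,j}}$ for $x>0$ (and $\bar h_e(0)=0$) and using that all weights are $\ge 1$, so every nonzero load is $\ge 1$, one gets $\bar f_e(x)\le\bar h_e(x)\le(Q+1)\bar f_e(x)$ on the relevant range; crucially, the $Q$ loss appears only in the linear term $\sigma_e x$, while the super‑linear terms are carried over unchanged and satisfy $\sum_j\xi_{e,j}x^{\alpha_{e,j}}\le\bar f_e(x)$.

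For the second phase, I would reduce \gnd under the cost functions $\eta_e x+\sum_{j=1}^q\xi_{e,j}x^{\alpha_{e,j}}$ (with $\eta_e=\sigma_e$) to the single‑power‑term case already handled in Theorem~\ref{thm:main-1}, by \emph{splitting} each resource $e$ into $q+1$ parallel copies $e_0,\dots,e_q$: copy $e_0$ carries the linear cost $\eta_e x$, copy $e_j$ carries the pure power cost $\xi_{e,j}x^{\alpha_{e,j}}$, and every reply that used $e$ is redefined to use all of $e_0,\dots,e_q$ with the original weight $w_{i,e}$. This is a legal \gnd instance under power cost functions with $(q+1)m$ resources, and its min‑cost oracle is a single call to the original oracle, since minimizing $\sum_{e\in p}\sum_{j=0}^q d_{e_j}$ over $p\in\P_i$ is minimizing $\sum_{e\in p}(\sum_j d_{e_j})$; hence the approximation factor stays $\tau$. (Equivalently, one can note directly that the convex relaxation of Theorem~\ref{thm:main-1}, its dual, the modified greedy surrogate objective, and the charging argument all decompose additively over the pairs $(e,j)$, so nothing ever used that a resource has only one power term.) The refined analysis behind Theorem~\ref{thm:main-1} then bounds the online cost on this instance by $O(\tau)$ times the linear part of any feasible solution plus $O((\e\alpha\tau)^\alpha)$ times its super‑linear part, using that $(\e\alpha_{e,j}\tau)^{\alpha_{e,j}}\le(\e\alpha\tau)^\alpha$ for each term because $\e\alpha\tau\ge 1$ and $\alpha:=\max_{e,j}\alpha_{e,j}\ge\alpha_{e,j}$.

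Combining the two phases finishes the proof. Let $\ell^\star$ be the loads of an offline optimum for the REP instance and $\widehat\ell$ those of the online solution. Since the online nonzero loads are $\ge 1$, the online REP cost is at most $\sum_e\bar h_e(\widehat\ell_e)$ (which is exactly its cost on the split instance), and by the second phase this is at most $O(\tau)\sum_e\sigma_e\ell^\star_e+O((\e\alpha\tau)^\alpha)\sum_e\sum_j\xi_{e,j}(\ell^\star_e)^{\alpha_{e,j}}$; substituting $\sigma_e\ell^\star_e\le Q\bar f_e(\ell^\star_e)$ and $\sum_j\xi_{e,j}(\ell^\star_e)^{\alpha_{e,j}}\le\bar f_e(\ell^\star_e)$ from the first phase bounds this by $O(Q\tau+(\e\alpha\tau)^\alpha)\sum_e\bar f_e(\ell^\star_e)$, i.e.\ $O(Q\tau+(\e\alpha\tau)^\alpha)$ times the REP optimum. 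The running time stays polynomial because each oracle call is unchanged and there are $(q+1)m$ resources. The only genuinely new ingredient is the one‑line estimate in the first phase; the step I would check most carefully is confirming that the primal–dual argument of Theorem~\ref{thm:main-1} really produces the \emph{separate} $O(\tau)$ and $O((\e\alpha\tau)^\alpha)$ charges against the linear and super‑linear parts of \opt rather than a single combined bound, since it is precisely this per‑term modularity (already needed for one power term) that keeps the final guarantee independent of the number $q$ of REP terms.
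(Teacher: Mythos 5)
Your route differs from the paper's: the paper proves Theorem~\ref{thm:main-3} by replacing each REP resource with $q$ \emph{(D)oS} resources (one copy inheriting $\sigma_e$ paired with the minimizing power term, the rest with $\sigma=0$), observing that $\bar f_e(x)=\sum_j f_{e_j}(x)$ exactly and that the new parameter $\max q_{e_j}$ equals $Q$, and then invoking Theorem~\ref{thm:main-1} as a black box. You instead reduce REP directly to pure power functions. In principle that could work too, but your specific reduction has a genuine gap, and it is exactly the step you flag at the end.

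The problem is the choice of linear coefficient in $\bar h_e(x)=\sigma_e x+\sum_j\xi_{e,j}x^{\alpha_{e,j}}$. With this choice you only get $\bar f_e\le\bar h_e\le (Q+1)\bar f_e$, so the optimum of the split power-function instance can be as large as $(Q+1)$ times the REP optimum. To still reach the target bound, you then need the claim that the online algorithm's split cost is at most $O(\tau)\cdot L^\star+O((\e\alpha\tau)^\alpha)\cdot H^\star$, i.e.\ separate charges of $O(\tau)$ against the \emph{linear part} of a feasible solution and $O((\e\alpha\tau)^\alpha)$ against its \emph{super-linear part}. Theorem~\ref{thm:apx-online} does not prove this: its bound is $\frac{L}{\tau}+\frac{H}{(\e\alpha\tau)^{\alpha}}\le 2\,\opt_{\text{split}}=2(L^\star+H^\star)$, so both $L$ and $H$ are charged against the \emph{total} optimum, not against the matching part. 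The stronger per-part claim is in fact false: if the offline optimum uses only linear resources ($H^\star=0$), the online algorithm may still incur $H>0$, and the only available bound is $H\le 2(\e\alpha\tau)^\alpha L^\star$, not $O((\e\alpha\tau)^\alpha)H^\star=0$. Substituting what Theorem~\ref{thm:apx-online} actually gives into your final combination yields $O\bigl((Q+1)(\tau+(\e\alpha\tau)^\alpha)\bigr)$, which has the unwanted cross term $Q(\e\alpha\tau)^\alpha$.

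The fix is to scale the linear coefficient down by $q_e^\star$, taking $\bar h'_e(x):=\frac{\sigma_e}{q_e^\star}\,x+\sum_j\xi_{e,j}x^{\alpha_{e,j}}=\xi_{e,j^\star}(q_e^\star)^{\alpha_{e,j^\star}-1}x+\sum_j\xi_{e,j}x^{\alpha_{e,j}}$, exactly as in Lemma~\ref{lem:fn-apx}. Then $\frac12\bar h'_e(x)\le\bar f_e(x)\le\max\{q_e^\star,1\}\cdot\frac{\sigma_e}{q_e^\star}x+\sum_j\xi_{e,j}x^{\alpha_{e,j}}$ for $x\in\{0\}\cup\R_{\ge1}$. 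Now the split optimum satisfies $\opt_{\text{split}}\le 2\,\opt_{\text{REP}}$ with no $Q$ factor, and the $Q$ factor appears only when translating the algorithm's split loads back to REP cost, where it multiplies only $L$: the algorithm's REP cost is at most $\max\{Q,1\}\,L+H\le 2\max\{Q,1\}\tau\,\opt_{\text{split}}+2(\e\alpha\tau)^\alpha\opt_{\text{split}}\le O(Q\tau+(\e\alpha\tau)^\alpha)\,\opt_{\text{REP}}$. With this correction your direct REP-to-power reduction is sound; it is essentially an unrolled version of the paper's two-step reduction (REP $\to$ (D)oS $\to$ power), and both routes yield the same bound.
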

The idea here is to reduce any \gnd problem with REP costs into another instance with (D)oS cost functions of form~\eqref{eq:energy} but with more resources.  
\subsection{Related Work}\label{subsec:related}

Most of the prior work in network design under (D)oS cost functions has focused on multicommodity routing requests with uniform weights (i.e., $w_{i,e}=d_i$ for all resources $e$ and requests $i$). \cite{AndrewsAZZ12} were the first to study this model and obtained an $O(q\cdot \log^{\alpha-1} D)$-approximation algorithm where $D=\max_{i=1}^N d_i$ is the maximum weight. When $\sigma_e=0$ for all resources $e$ (in which case the objective is a weighted sum of power functions), \cite{MakarychevS18} obtained an improved $A_\alpha$-approximation algorithm. These results apply to  undirected as well as directed graphs.

Further results are known for multicommodity routing in {\em undirected graphs} in the special case of {\em uniform} cost functions, where $f_e(x) =c_e\cdot f(x)$ for a common (D)oS function $f(x)$.  When costs are incurred on edges, \cite{AndrewsAZ16} obtained a poly-logarithmic $O(\log^{O(\alpha)} N)$-approximation algorithm, and \cite{AntoniadisIKMNP20}  later improved the approximation ratio to  $O(\log^{ \alpha } N)$. When costs are incurred on nodes (which is harder than the edge-version), \cite{KrishnaswamyNPS14} obtained an $O(\log^{O(\alpha)} N)$-approximation algorithm. All these results rely crucially on the uniformity of the cost function. In particular, they use the fact that it is best to aggregate $q=(\sigma/\xi)^{1/\alpha}$ units of demand, after which the aggregated demands can be routed in a ``well separated'' manner. It is unclear if these techniques can be used for non-uniform costs as the ``aggregate demand'' quantity for different resources is different (it is $q_e = (\sigma_e/\xi_e)^{1/\alpha_e}$ for each resource $e$). Furthermore, these results    relied on cut-sparsification and small flow-cut gaps, which do not extend to directed graphs. In fact, the directed Steiner forest problem (which is a special case of \gnd) is hard to approximate better than $\Omega(2^{\log^{1-\epsilon} N})$ for any constant $\epsilon>0$ \cite{DodisK99}. We note that the parameter $q\approx N$ for \gnd instances corresponding to Steiner forest: so we cannot expect an approximation ratio much better than $poly(q)$ for \gnd.  In fact, any $o(\sqrt{q})$-approximation algorithm for \gnd would improve on the best approximation ratio known for directed Steiner forest~\cite{ChekuriEGS11,FeldmanKN12}.

As mentioned earlier, \cite{EmekKLS20} considered the much wider class of \gnd problems, and obtained an $O(q)$-approximation algorithm. As discussed in \cite{EmekKLS20}, their result extends prior work involving (D)oS cost functions in several ways: unrelated weights, non-uniform cost functions, strongly polynomial runtime etc. Our result inherits all these advantages even in the online setting. The technique in \cite{EmekKLS20} was based on the ``smoothness'' toolbox from \cite{Roughgarden15}. Our approach (discussed above) is completely different, and leads to a much simpler algorithm.

In the online setting, \cite{AntoniadisIKMNP20} obtained an $\tilde{O}(\log^{3\alpha +1}N)$-competitive randomized algorithm for multicommodity routing in undirected graphs with uniform cost functions on edges and uniform weights.  This ratio is incomparable to the $O(q+(\e \alpha)^\alpha)$ deterministic online ratio that we obtain (even in more general settings).  
When $\sigma_e=0$ for all resources $e$ and all $\alpha_e$ are uniform, $O(\alpha)^\alpha$-competitive online algorithms were known for load balancing~\cite{AwerbuchAGKKV95} and multicommodity routing \cite{GuptaKP12}. Our algorithm can be seen as a natural extension of these results to the  setting of \gnd. \cite{AwerbuchAGKKV95} used a  potential-function analysis that appears hard to extend  to non-uniform $\alpha_e$s.  As discussed in \S\ref{subsec:results}, though our approach as well as \cite{GuptaKP12} are  based on the online primal-dual method, there are important differences as well.

The online primal-dual method (see the survey~\cite{BuchbinderN09}) is a very general technique that has led to several strong results in online algorithms. Typically, this approach is applied with covering/packing linear-program relaxations, e.g. \cite{AlonAABN09,AlonAABN06,BansalBN12}. However, a number of recent papers, e.g. \cite{GuptaKP12,AnandGK12,DevanurH18,AzarBCCCGHKNNP16,NagarajanS17,HuangK19}, have extended this to the setting of covering programs with convex objectives. Our result adds to this line of work. Although our fractional relaxation is a ``convex covering program'' as studied in \cite{AzarBCCCGHKNNP16}, we cannot use the general-purpose algorithm presented there because the number of variables in our relaxation for \gnd is exponential: the competitive ratio in \cite{AzarBCCCGHKNNP16} is logarithmic in the number of variables. We note however that our idea of setting  dual variables based on the gradient of the primal objective (at the final solution) was  partly motivated from \cite{AzarBCCCGHKNNP16}. 

\subsection{Paper Outline}
We start with the reduction from (D)oS cost functions to weighted power functions in \S\ref{sec:redn}. In \S\ref{sec:frac} we provide a fractional online algorithm for the natural convex relaxation of \gnd under power cost functions. Then, in \S\ref{sec:int} we extend this to an integral online algorithm. \S\ref{sec:overall} puts things together and finishes the proofs of Theorems~\ref{thm:main-1} and \ref{thm:main-3}. Finally, \S\ref{sec:LB} provides the online lower bounds (Theorem~\ref{thm:main-2}). 

\section{Reducing (D)oS Functions to Weighted Power Functions}\label{sec:redn}
We first make the simple but useful observation that any cost-function $f_e$ of the form~\eqref{eq:energy} can be approximated by a {\em convex power function}, at the loss of a multiplicative factor $2q_e$, where $q_e : = (\sigma_e/\xi_e)^{1/\alpha_e}$. To this end, define for each $e\in E$, a new function
\begin{equation}\label{eq:convex-fn}
h_e(x) \,:= \,\xi_e q_e^{\alpha_e-1}\cdot x + \xi_e\cdot x^{\alpha_e},\quad \mbox{for all }x\ge 0.
\end{equation}
\begin{lemma}\label{lem:fn-apx} For each $e\in E$ and $x\in \{0\}\cup \R_{\ge 1}$, we have 
$$\frac12 \cdot h_e(x)\le f_e(x)\le \max\{q_e,1\} \cdot \xi_e q_e^{\alpha_e-1}\cdot x + \xi_e\cdot x^{\alpha_e} \le \max\{q_e,1\} \cdot h_e(x).$$
\end{lemma}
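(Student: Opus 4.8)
The plan is to verify the three inequalities separately, handling the cases $x=0$ and $x\ge 1$. The case $x=0$ is immediate: all three expressions — $\frac12 h_e(0)$, $f_e(0)$, and $\max\{q_e,1\}\cdot h_e(0)$ — equal zero, so nothing needs to be checked. So from now on assume $x\ge 1$, in which case $f_e(x)=\sigma_e+\xi_e x^{\alpha_e}$, and recall that by definition of $q_e$ we have $\sigma_e = \xi_e q_e^{\alpha_e}$. Rewriting, $f_e(x) = \xi_e q_e^{\alpha_e} + \xi_e x^{\alpha_e}$, while $h_e(x) = \xi_e q_e^{\alpha_e-1} x + \xi_e x^{\alpha_e}$.

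For the rightmost inequality, note that the middle expression $\max\{q_e,1\}\cdot \xi_e q_e^{\alpha_e-1} x + \xi_e x^{\alpha_e}$ is obtained from $\max\{q_e,1\}\cdot h_e(x) = \max\{q_e,1\}\cdot \xi_e q_e^{\alpha_e-1} x + \max\{q_e,1\}\cdot \xi_e x^{\alpha_e}$ by replacing the coefficient $\max\{q_e,1\}\ge 1$ on the $x^{\alpha_e}$ term by $1$; since all quantities are nonnegative, this only decreases the value, giving $\le \max\{q_e,1\}\cdot h_e(x)$.

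For the middle inequality $f_e(x)\le \max\{q_e,1\}\cdot \xi_e q_e^{\alpha_e-1} x + \xi_e x^{\alpha_e}$, it suffices (after subtracting the common $\xi_e x^{\alpha_e}$ term) to show $\xi_e q_e^{\alpha_e}\le \max\{q_e,1\}\cdot \xi_e q_e^{\alpha_e-1}\cdot x$, i.e. $q_e\le \max\{q_e,1\}\cdot x$. This holds because $x\ge 1$ and $\max\{q_e,1\}\ge q_e$. For the leftmost inequality $\frac12 h_e(x)\le f_e(x)$, i.e. $\frac12\bigl(\xi_e q_e^{\alpha_e-1}x + \xi_e x^{\alpha_e}\bigr)\le \xi_e q_e^{\alpha_e}+\xi_e x^{\alpha_e}$, it suffices to show $\xi_e q_e^{\alpha_e-1} x\le \xi_e q_e^{\alpha_e} + \xi_e x^{\alpha_e}$ (then the remaining $\frac12 \xi_e x^{\alpha_e}$ on the left is dominated by $\xi_e x^{\alpha_e}$ on the right). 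Dividing by $\xi_e$, this is the claim $q_e^{\alpha_e-1}x\le q_e^{\alpha_e}+x^{\alpha_e}$; it follows from the weighted AM-GM (or Young's) inequality applied to $q_e^{\alpha_e-1}\cdot x = \bigl(q_e^{\alpha_e}\bigr)^{(\alpha_e-1)/\alpha_e}\cdot\bigl(x^{\alpha_e}\bigr)^{1/\alpha_e}\le \frac{\alpha_e-1}{\alpha_e}q_e^{\alpha_e}+\frac{1}{\alpha_e}x^{\alpha_e}\le q_e^{\alpha_e}+x^{\alpha_e}$, valid since $\alpha_e\ge 1$.

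The only mild subtlety — the ``main obstacle,'' such as it is — is the leftmost inequality, where one must not throw away the nonlinear term too carelessly; splitting $\frac12 h_e$ as $\frac12$ of the linear part plus $\frac12$ of the nonlinear part and bounding each by a distinct piece of $f_e$ (the linear part via Young's inequality, the nonlinear part trivially) is what makes the constant $\frac12$ work. Everything else is a direct substitution using $\sigma_e=\xi_e q_e^{\alpha_e}$ and the hypothesis $x\ge 1$.
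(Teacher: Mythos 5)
Your proof is correct. The only substantive difference from the paper's argument is in the leftmost inequality $\tfrac12 h_e(x)\le f_e(x)$: the paper splits into the two cases $x<q_e$ and $x\ge q_e$, using $q_e^{\alpha_e-1}x\le q_e^{\alpha_e}$ in the first case (giving $h_e\le f_e$ outright) and $q_e^{\alpha_e-1}x\le x^{\alpha_e}$ in the second (giving $h_e\le 2\xi_e x^{\alpha_e}\le 2f_e$), whereas you bound the cross-term $q_e^{\alpha_e-1}x$ by the weighted AM--GM/Young's inequality $q_e^{\alpha_e-1}x\le\frac{\alpha_e-1}{\alpha_e}q_e^{\alpha_e}+\frac{1}{\alpha_e}x^{\alpha_e}$. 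Both are one-line elementary arguments; the case split effectively uses $q_e^{\alpha_e-1}x\le\max\{q_e^{\alpha_e},x^{\alpha_e}\}$ while yours uses the sharper convex-combination bound, which in fact yields $h_e(x)\le(1+\tfrac{1}{\alpha_e})f_e(x)$, a slightly better constant than $2$ (though neither proof needs this slack). The treatment of the middle and rightmost inequalities and of the $x=0$ base case is essentially identical to the paper's.
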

\begin{proof}
At $x = 0$ the inequalities trivially hold. So we assume $x\ge 1$ in the rest of the proof. For the first inequality, we divide it into two cases. If $x < q_e$, then 
\[
h_e(x) = \xi_e q_e^{\alpha_e-1}\cdot x + \xi_e\cdot x^{\alpha_e} \leq \xi_e q_e^{\alpha_e} + \xi_e\cdot x^{\alpha_e} =  \sigma_e +  \xi_e\cdot x^{\alpha_e}= f_e(x)
\]
If $x \geq q_e$, then 
\[
h_e(x) = \xi_e q_e^{\alpha_e-1}\cdot x + \xi_e\cdot x^{\alpha_e} \leq 2 \xi_e x^{\alpha_e} \leq 2 (\xi_e x^{\alpha_e} + \sigma_e) = 2f_e(x)
\] 
For the second inequality, we have $$\max\{1,q_e\}\cdot \xi_e q_e^{\alpha_e-1}\cdot x + \xi_e\cdot x^{\alpha_e}  \ge \xi_e q_e^{\alpha_e}\cdot x +  \xi_e\cdot x^{\alpha_e} = \sigma_e x +   \xi_e\cdot x^{\alpha_e} \geq \sigma_e + \xi_e\cdot x^{\alpha_e} = f_e(x),$$
where the second inequality uses $x\ge 1$.  
\end{proof}
Recall  that $q:=\max_{e\in E} q_e$. By Lemma~\ref{lem:fn-apx}, at the loss of factor $2\max\{q,1\}$, it suffices to solve the \gnd problem under power cost functions, where each resource $e\in E$ has a cost function of the form $g_e(x)=c_e\cdot x^{\alpha_e}$ (see details  in \S\ref{sec:overall}). In the next two sections, we provide  online algorithms for \gnd under {\em weighted power functions}. 

\section{Fractional Online Algorithm}\label{sec:frac}
We consider the following convex program relaxation for \gnd, denoted $(P)$. 
\begin{align}
  \min\quad&\sum_{e\in E} c_e\cdot \left( \sum_{i=1}^N w_{i,e} \sum_{p\in \P_i: e\in p} x_{i,p}\right)^{\alpha_e}\notag\\
  \mbox{s.t.}\quad& \sum_{p\in \P_i}x_{i,p} \ge 1,\qquad \forall i\in [N]\label{cons:primal}\\
  &\mathbf{x} \ge  \mathbf{0}.\notag
\end{align}
Note that all constraints are of ``covering type'' and the objective is convex. However, there are an exponential number of variables as the replies $\P_i$ are implicitly specified. We will solve this program approximately using the online primal-dual method. First, we provide a continuous time online algorithm, that is easier to describe and analyze (Theorem~\ref{thm:frac-online}). Then, we explain how to obtain a polynomial time implementation at a small loss in the competitive ratio (\S\ref{app:poly-time}).

Let $E_1=\{e\in E: \alpha_e=1\}$. 
The dual of  convex program $(P)$ is below, denoted $(D)$.
  \begin{align}
  \max\quad&\sum_{i=1}^N y_i \,-\, \sum_{e\in E\setminus E_1} \frac{ c_e\alpha_e}{\beta_e} \cdot z_e^{\beta_e} \notag\\
  \mbox{s.t.}\quad& \sum_{e\in p}w_{i,e} c_e \alpha_e\cdot z_e \ge y_{i},\qquad \forall p\in \P_i,\, \forall i\in [N] \label{cons:dual}\\
  & z_e \le 1, \qquad\qquad\qquad\qquad\forall e\in E_1 \label{cons:d-UB}\\
  &\mathbf{y}, \mathbf{z} \ge  \mathbf{0}.\notag
\end{align}
  Above, for each $e\in E\setminus E_1$, value $\beta_e> 1$ is the conjugate of $\alpha_e$, i.e. $\frac{1}{\alpha_e}+\frac{1}{\beta_e}=1$. Note that there are no terms in the dual objective  corresponding to $e\in E_1$.  We derive this dual in Appendix~\ref{app:derive-dual}. It turns out that strong duality holds for this primal-dual pair. However, we will only use weak duality, which is proved below.
\begin{lemma}\label{lem:weak-duality}
For any primal $x\in (P)$ and dual $(y,z)\in (D)$ solutions,
$$ \sum_{e\in E} c_e\cdot \left( \sum_{i=1}^N w_{i,e} \sum_{p\in \P_i: e\in p} x_{i,p}\right)^{\alpha_e} \,\,\ge \,\, \sum_{i=1}^N y_i \,-\, \sum_{e\in E\setminus E_1} \frac{ c_e\alpha_e}{\beta_e} \cdot z_e^{\beta_e}.$$
\end{lemma}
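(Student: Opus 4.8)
The plan is to prove weak duality by the standard route: establish a pointwise (term-by-term) inequality relating the primal objective density at resource $e$ to the dual contribution at $e$, using the Fenchel–Young / Young's inequality for conjugate exponents, and then sum over $e$ and combine with the primal and dual feasibility constraints. Concretely, fix primal feasible $\mathbf{x}$ and dual feasible $(\mathbf{y},\mathbf{z})$, and let $\ell_e := \sum_{i=1}^N w_{i,e}\sum_{p\in\P_i:\, e\in p} x_{i,p}$ denote the load on resource $e$ under $\mathbf{x}$. I would start from the dual constraints~\eqref{cons:dual}: for each request $i$ and each reply $p\in\P_i$ we have $y_i \le \sum_{e\in p} w_{i,e} c_e \alpha_e z_e$. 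Multiplying by $x_{i,p}\ge 0$, summing over $p\in\P_i$, and using the covering constraint~\eqref{cons:primal} (i.e. $\sum_{p\in\P_i} x_{i,p}\ge 1$, together with $y_i\ge 0$), gives $y_i \le \sum_{p\in\P_i} x_{i,p}\sum_{e\in p} w_{i,e} c_e\alpha_e z_e$. Summing over $i$ and swapping the order of summation to collect the coefficient of each $z_e$, the right-hand side becomes $\sum_{e\in E} c_e\alpha_e z_e \cdot \ell_e$. Hence $\sum_i y_i \le \sum_{e\in E} c_e \alpha_e z_e \ell_e$.

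Next I split the sum over $E$ into $E_1$ (where $\alpha_e=1$) and $E\setminus E_1$. For $e\in E_1$: there is no $z_e^{\beta_e}$ term in the dual objective, but constraint~\eqref{cons:d-UB} gives $z_e\le 1$, so $c_e\alpha_e z_e\ell_e = c_e z_e \ell_e \le c_e\ell_e = c_e \ell_e^{\alpha_e}$, which is exactly the primal objective term for $e$. For $e\in E\setminus E_1$: I apply Young's inequality in the form $ab \le \frac{a^{\alpha_e}}{\alpha_e} + \frac{b^{\beta_e}}{\beta_e}$ with $a = \ell_e$ and $b = z_e$ (using $\frac{1}{\alpha_e}+\frac{1}{\beta_e}=1$), after pulling out the factor $c_e\alpha_e$: $c_e\alpha_e z_e \ell_e = c_e\alpha_e\, (\ell_e)(z_e) \le c_e\alpha_e\big(\tfrac{\ell_e^{\alpha_e}}{\alpha_e} + \tfrac{z_e^{\beta_e}}{\beta_e}\big) = c_e \ell_e^{\alpha_e} + \tfrac{c_e\alpha_e}{\beta_e} z_e^{\beta_e}$. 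Summing these bounds over all $e\in E$ yields $\sum_i y_i \le \sum_{e\in E} c_e \ell_e^{\alpha_e} + \sum_{e\in E\setminus E_1}\tfrac{c_e\alpha_e}{\beta_e} z_e^{\beta_e}$, and rearranging gives precisely the claimed inequality. (One should double-check the degenerate cases $\ell_e=0$ or $z_e=0$, where the inequalities hold trivially since all quantities are nonnegative.)

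I do not expect any serious obstacle here — weak duality proofs of this kind are routine. The only place that needs a little care is making sure the scalar factors line up correctly when applying Young's inequality: the dual objective penalty is $\frac{c_e\alpha_e}{\beta_e} z_e^{\beta_e}$, not $\frac{c_e}{\beta_e}z_e^{\beta_e}$, so the $\alpha_e$ must be carried through the inequality $ab\le a^{\alpha_e}/\alpha_e + b^{\beta_e}/\beta_e$ consistently (equivalently, apply Young's inequality to $a=\ell_e$, $b=z_e$ and then multiply both sides by $c_e\alpha_e$). The second point requiring attention is the role of the constraint $z_e\le 1$ for $e\in E_1$: this is exactly what substitutes for the missing conjugate-penalty term when $\alpha_e=1$ (where $\beta_e=\infty$ and Young's inequality degenerates), so the two cases of the resource split must be handled separately rather than by a single formula.
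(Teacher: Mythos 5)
Your proposal is correct and matches the paper's proof essentially step for step: multiply the dual constraint~\eqref{cons:dual} by $x_{i,p}$, sum and use the covering constraint~\eqref{cons:primal} to get $\sum_i y_i \le \sum_e c_e\alpha_e z_e \ell_e$, then split into $E_1$ (handled via $z_e\le 1$) and $E\setminus E_1$ (handled via Young's inequality with exponents $\alpha_e,\beta_e$). The only superficial difference is that the paper uses the convention $\beta_e=\infty$ for $e\in E_1$ to write a unified formula before splitting, while you handle the two cases separately from the start.
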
 
\begin{proof}
For easier notation, let $\ell_e:=\sum_i \sum_{p\in \P_i: e\in p} w_{i,e}\cdot x_{i,p}$ be the fractional load on each $e\in E$. For each   $e\in E_1$, let $\beta_e=\infty$: note that $\frac1{\beta_e} z_e^{\beta_e}=0$ as $z_e\le 1$.  We will show that 
$$\sum_i y_i \le \sum_{e\in E} \alpha_e c_e\cdot \left( \frac{1}{\alpha_e}\cdot \ell_e^{\alpha_e} + \frac{1}{\beta_e}\cdot z_e^{\beta_e} \right),$$ which would prove the lemma.  Indeed, we have:
\begin{align}
\sum_i y_i & \le \sum_i \left(\sum_{p\in \P_i} x_{i,p} \right) \cdot y_i   \le  \sum_i  \sum_{p\in \P_i} x_{i,p} \cdot \left(\sum_{e\in p} w_{i,e}c_e\alpha_e\cdot z_e\right) \label{eq:weak-dual-1}\\
& = \sum_e c_e \alpha_e  \cdot z_e \left(\sum_i \sum_{p\in \P_i: e\in p} w_{i,e} \cdot x_{i,p}\right) = \sum_e c_e \alpha_e  \cdot z_e \cdot \ell_e \label{eq:weak-dual-2}\\
&\le \sum_{e\in E_1} c_e \cdot \ell_e +  \sum_{e\in E\setminus E_1} c_e \alpha_e  \cdot z_e \cdot \ell_e   \le \sum_e c_e \alpha_e  \left( \frac{1}{\alpha_e}\cdot \ell_e^{\alpha_e} +  \frac{1}{\beta_e}\cdot z_e^{\beta_e} \right).\label{eq:weak-dual-3}
\end{align}
Above, the first inequality in \eqref{eq:weak-dual-1} is by constraint~\eqref{cons:primal} and non-negativity, and the last inequality in \eqref{eq:weak-dual-1} is by constraint~\eqref{cons:dual}. The equality in \eqref{eq:weak-dual-2} is by interchanging summation. The first inequality in \eqref{eq:weak-dual-3} is by constraint~\eqref{cons:d-UB} and the last inequality is by Young's inequality, which says $A\cdot B \le \frac{1}{\alpha}\cdot A^\alpha +  \frac{1}{\beta}\cdot B^\beta$ for any $A,B\ge 0$ and $\alpha,\beta > 1$ with $\frac{1}{\alpha}+\frac{1}{\beta}=1$. This completes the proof.
\end{proof}

\begin{algorithm}
\DontPrintSemicolon 
 Upon arrival of request $i$, do the following.\;
\For{each continuous time $t\in [0,1]$}{
    Choose reply $p^*\in \P_i$ using the min-cost oracle under  costs $d_e= \alpha_e c_e \cdot \ell_e^{\alpha_e-1}\cdot w_{i,e}$ for each $e\in E$, where $\ell_e=\sum_i \sum_{p\in \P_i: e\in p} w_{i,e}\cdot x_{i,p}$ is the current fractional load on $e$.\; 
Raise primal variable $ x_{i,p^*}$ at rate one, i.e. $\frac{\partial}{\partial t} x_{i,p^*}=1$.\;
}
\caption{Fractional online algorithm for $(P)$}
\end{algorithm}

\begin{theorem}\label{thm:frac-online}
The fractional online algorithm has competitive ratio at most $\alpha^\alpha$ where $\alpha=\max_{e\in E} \alpha_e$.
\end{theorem}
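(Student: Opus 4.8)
The plan is to run the online primal-dual analysis: maintain a feasible dual solution $(y,z)$ alongside the primal, and show the primal cost is always at most $\alpha^\alpha$ times the dual objective, so weak duality (Lemma~\ref{lem:weak-duality}) finishes the job. First I would fix, at the end of the run, the dual variables $z_e$ to be (a scalar multiple of) the final fractional load $\ell_e$ on each resource — concretely something like $z_e = \ell_e/\alpha_e$ or $z_e=\ell_e$ up to constants, mimicking the ``gradient of the primal objective'' choice alluded to in the introduction; for $e\in E_1$ one needs $z_e\le 1$, and here one uses that the weights $w_{i,e}\ge 1$ together with the covering constraint to argue $\ell_e\ge 1$ cannot be forced to be large in a way that breaks this — more likely $z_e$ for $e\in E_1$ is handled by a direct charging rather than via the power term, consistent with the comment that linear resources behave differently. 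Then I would set each $y_i$ to be the ``cost rate'' at which request $i$ paid as it was processed, i.e. essentially the value $\sum_{e\in p^*}w_{i,e}\,\alpha_e c_e\,\ell_e^{\alpha_e-1}$ integrated appropriately — this is exactly the quantity the min-cost oracle minimized, so dual feasibility~\eqref{cons:dual} will hold by optimality of $p^*$: for every $p\in\P_i$, $\sum_{e\in p}w_{i,e}c_e\alpha_e z_e \ge$ (cost of chosen reply) $\ge y_i$, provided $z_e$ dominates $\ell_e^{\alpha_e-1}$ up to the constant absorbed into $y_i$.

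Second, the heart of the argument is to bound the total primal cost increase against $\sum_i y_i - \sum_{e\notin E_1}\frac{c_e\alpha_e}{\beta_e}z_e^{\beta_e}$. I would compute the rate of increase of the primal objective when $x_{i,p^*}$ is raised at unit rate: by the chain rule this is $\sum_{e\in p^*} \alpha_e c_e \ell_e^{\alpha_e-1} w_{i,e}$, which is exactly $d(p^*)$ under the oracle costs, and integrating over $t\in[0,1]$ and summing over $i$ gives the final primal cost — but this overcounts because $\ell_e$ grows during processing. So the key inequality is a ``discrete-to-continuous'' estimate: for a single resource $e$ with final load $L_e$, $\int_0^{L_e}\alpha_e c_e \ell^{\alpha_e-1}\,d\ell = c_e L_e^{\alpha_e}$, i.e. the integrated marginal cost equals the final cost (this is where continuous time is clean — no rounding loss). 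Thus $\sum_i y_i$ (summed over the unit time interval, appropriately normalized) equals, up to constants, $\sum_e c_e \ell_e^{\alpha_e}$ plus cross terms. The remaining work is to show $\sum_i y_i \ge c \cdot \mathrm{PRIMAL}$ for a suitable constant while the subtracted term $\sum_{e\notin E_1}\frac{c_e\alpha_e}{\beta_e}z_e^{\beta_e}$ — which, with $z_e\sim\ell_e$, is of order $\sum_e c_e \ell_e^{\beta_e}\cdot(\ldots)$ — does not eat up too much. Here is where the factor $\alpha^\alpha$ appears: balancing $\frac{1}{\beta_e}=1-\frac1{\alpha_e}$ against the scaling constant in $z_e$, optimizing over the free scalar $\lambda$ in $z_e=\lambda\ell_e^{\alpha_e-1}$ yields a loss of $\alpha_e^{\alpha_e}\le\alpha^\alpha$, analogous to the $\ell_p$-norm load balancing bound of \cite{AwerbuchAGKKV95}.

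Concretely, I would structure it as: (1) define $z_e := c \cdot \ell_e^{\alpha_e-1}$ (for $e\notin E_1$) and $z_e:=1$ or a direct bound (for $e\in E_1$), and $y_i := \int_0^1 d(p^*_t)\,dt$ the cost paid by request $i$; (2) verify $(y,z)$ feasible for $(D)$ — constraint~\eqref{cons:dual} from oracle optimality, \eqref{cons:d-UB} from $w_{i,e}\ge1$ and the structure of the solution; (3) show $\mathrm{PRIMAL} = \sum_i y_i$ exactly (or up to a constant) via the integral identity above, noting each request's contribution to the primal increase is captured by its $y_i$; (4) bound $\sum_{e\notin E_1}\frac{c_e\alpha_e}{\beta_e}z_e^{\beta_e} \le (1-\frac1\alpha)\cdot$ a small multiple of $\mathrm{PRIMAL}$ after choosing $c$, so that $\mathrm{DUAL} = \sum_i y_i - (\text{that term}) \ge \mathrm{PRIMAL}/\alpha^\alpha$; (5) invoke Lemma~\ref{lem:weak-duality}. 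The main obstacle I anticipate is step (3)/(4): making the accounting exact requires care because when raising $x_{i,p^*}$ the primal objective increases by $\sum_{e\in p^*}\alpha_e c_e\ell_e^{\alpha_e-1}w_{i,e}$ but $y_i$ should equal $\sum_{e\in p^*}w_{i,e}c_e\alpha_e z_e$ evaluated at the \emph{final} (larger) loads, so there is a genuine gap between ``marginal cost at time of arrival'' and ``marginal cost at the end'' — resolving this is precisely where the $\alpha^\alpha$ (rather than a constant) enters, and one must either integrate the primal growth request-by-request and compare to a Riemann sum, or equivalently use convexity: $\ell_e^{\alpha_e} - (\ell_e - \delta)^{\alpha_e} \le \alpha_e \ell_e^{\alpha_e-1}\delta$, summed telescopically, to charge the whole primal to the final-gradient dual. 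I would carry out the per-resource telescoping estimate first, since everything else is bookkeeping around it.
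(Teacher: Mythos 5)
Your proposal tracks the paper's dual-fitting argument quite closely: fix dual variables at the \emph{final} loads, $z_e=\delta\,\bar{\ell}_e^{\alpha_e-1}$ for a tunable $\delta\in(0,1]$, set $y_i$ to (a $\delta$-scaling of) the cost chargeable to request $i$, verify feasibility of \eqref{cons:dual} via oracle optimality plus monotonicity of loads, bound the $z$-penalty via $(\alpha_e-1)\beta_e=\alpha_e$, and optimize $\delta$. Your variant of $y_i$ (the cost actually paid over $t\in[0,1]$, scaled by $\delta$) is a perfectly valid alternative to the paper's $y_i=\min_{p\in\P_i}\sum_{e\in p}w_{i,e}c_e\alpha_e z_e$; it gives $\sum_i y_i=\delta\bar{P}$ with equality rather than the paper's inequality, but the rest of the analysis is identical. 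So the route is essentially the paper's.

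Two points need correction, though. (1) You worry about constraint~\eqref{cons:d-UB} for $e\in E_1$ and reach for $w_{i,e}\ge 1$ and the covering constraints, then hedge toward a ``direct charging.'' This is misplaced: for $e\in E_1$ we have $\alpha_e=1$, hence $\bar{\ell}_e^{\alpha_e-1}=1$ and $z_e=\delta\le 1$ automatically; there is nothing to argue, and $w_{i,e}\ge 1$ plays no role in this proof (it is used only in the reduction of Lemma~\ref{lem:fn-apx}). (2) You attribute the $\alpha^\alpha$ factor to ``the gap between marginal cost at time of arrival and marginal cost at the end.'' That gap costs nothing: because $y_i$ is built from final loads and loads only increase, the marginal cost of the greedy reply at arrival time is already dominated by $y_i/\delta$, so $\bar{P}\le \tfrac{1}{\delta}\sum_i y_i$ is free. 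The $\alpha^\alpha$ arises entirely from balancing the gain $\delta\bar{P}$ against the $z$-penalty $\delta^{\alpha/(\alpha-1)}(\alpha-1)\bar{P}$, optimized at $\delta=\alpha^{-(\alpha-1)}$. You do also mention this balancing, but the two explanations you give are in tension; only the second is correct, and you should drop the first before writing the argument out.
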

  \begin{proof}
  The proof is by dual fitting: we will provide a feasible dual solution $(y,z)$ and show that the online primal solution $\bar{x}$ has objective at most $\alpha^\alpha$ times the dual objective. Combined with Lemma~\ref{lem:weak-duality}, this would imply the theorem. 
  
Let $\bar{\ell}_e= \sum_i \sum_{p\in \P_i: e\in p} w_{i,e}\cdot \bar{x}_{i,p}$ be the final load on each $e\in E$. Let $\delta\in (0,1]$ be some parameter, and define the dual solution:
  $$z_e = \delta\cdot \bar{\ell}_e^{\alpha_e-1},\qquad \forall e\in E.$$
  $$y_i = \min_{p\in \P_i} \sum_{e\in p} w_{i,e}c_e\alpha_e\cdot z_e,\qquad \forall i\in [N].$$
Note that dual-constraint~\eqref{cons:d-UB} is satisfied as $z_e=\delta\le 1$ for all $e\in E_1$. Moreover, \eqref{cons:dual} is satisfied by definition of $y$. So $(y,z)$ is a feasible dual solution. For each request $i$, let $q_i\in \P_i$ denote the reply that achieves the minimum cost in the definition of $y_i$ above.

We now relate the primal objective $\bar{P} = \sum_e c_e\cdot \bar{\ell}_e^{\alpha_e}$ with the dual objective $D$, by showing:
\begin{equation}
\label{eq:primal-dual-frac}
D  \,\,\ge \,\, \left(\delta -(\alpha-1)\cdot \delta^{\frac{\alpha}{\alpha  -1}}\right)\cdot \bar{P} 
\end{equation}
Consider the algorithm when some request $i$ arrives. For each time $t\in [0,1]$, if $p^*\in\P_i$ is the current reply and $\{\ell_e\}_{e\in E}$ denotes the current loads, then by the primal update:
\begin{align*}
\frac{\partial}{\partial t} \bar{P} & = \sum_{e\in p^*} c_e\alpha_e   \left( \sum_{i=1}^N w_{i,e} \sum_{p\in \P_i: e\in p} x_{i,p}\right)^{\alpha_e-1} w_{i,e}  = \sum_{e\in p^*} w_{i,e} c_e\alpha_e\cdot  \ell_e^{\alpha_e-1} \le \sum_{e\in q_i} w_{i,e} c_e\alpha_e\cdot  \ell_e^{\alpha_e-1}  \\
& \le  \sum_{e\in q_i} w_{i,e} c_e\alpha_e\cdot \bar{\ell}_e^{\alpha_e-1} =  \frac{1}{\delta}\cdot y_i.
\end{align*}
Above, the first inequality is by the choice of the current reply $p^*$ at time $t$, the second inequality is by monotonicity of the primal solution $x$ over time, and the last equality is by the choice of the dual value $y_i$. It follows that the increase in $\bar{P}$ due to request $i$ is at most $\frac{y_i}{\delta}$. Adding over all $i$,  
$$\bar{P} \le \frac{1}{\delta}\sum_{i=1}^N y_i.$$
Now, consider the contribution of the $z$-variables to the dual objective:
$$\sum_{e\in E\setminus E_1} \frac{c_e\alpha_e}{\beta_e}\cdot z_e^{\beta_e} = \sum_{e\in E\setminus E_1} \delta^{\beta_e}\frac{c_e\alpha_e}{\beta_e} \left(\bar{\ell}_e^{\alpha_e-1}\right)^{\beta_e}= \sum_{e\in E\setminus E_1} \delta^{\beta_e} c_e(\alpha_e-1)  \bar{\ell}_e^{\alpha_e}\le \delta^{\frac{\alpha}{\alpha-1}}(\alpha-1) \sum_{e\in E\setminus E_1}c_e \bar{\ell}_e^{\alpha_e}.  $$
The equalities use the fact that $\frac{1}{\beta_e}=1-\frac{1}{\alpha_e}$. The inequality above uses that $\delta\le 1$ and $\beta_e = 1+\frac{1}{\alpha_e-1}\ge 1+\frac{1}{\alpha-1}$ for all $e$. Finally,   the right-hand-side above is at most  $\delta^{\frac{\alpha}{\alpha-1}}(\alpha-1) \cdot \bar{P}$. Therefore, the dual objective is:
$$D = \sum_{i=1}^N y_i - \sum_{e\in E\setminus E_1} \frac{c_e\alpha_e}{\beta_e}\cdot z_e^{\beta_e} \ge \delta\cdot \bar{P} - \delta^{\frac{\alpha}{\alpha-1}}(\alpha-1) \cdot \bar{P},$$
which proves \eqref{eq:primal-dual-frac}. Finally, choosing $\delta=1/\alpha^{\alpha-1}$, we obtain $\bar{P}\le \alpha^\alpha\cdot D$. 
\end{proof}

\subsection{Polynomial Time Algorithm}\label{app:poly-time}
  To make the previous (continuous-time) algorithm run in polynomial time, we show how to reduce the number of queries to the min-cost oracle. The main idea is to  perform  a new query whenever the  cost (under  current loads) of the current reply  increases by a factor $1+\epsilon$, where $\epsilon>0$ is a constant. Recall that the cost-function under loads $\{\ell_e\}_{e\in E}$ is
  $d_e= \alpha_e c_e \cdot \ell_e^{\alpha_e-1}\cdot w_{i,e}$ for all $e\in E$. We also artificially increase the initial load on every resource to be $\eta\rightarrow 0$ rather than zero. 
Below, we will ensure that (1) the number of queries to the min-cost oracle is polynomial and (2) the competitive ratio is still roughly $\alpha^\alpha$.
 
 Recall that $m=|E|$ is the number of resources, and $\alpha = \max_e \alpha_e$. By scaling costs, we can assume (without loss of generality) that $c_e\ge 1$ for all $e\in E$. Moreover, recall that all weights $w_{i,e}\ge 1$. Let $B$ denote the maximum cost/weight in the instance. 
  Let  $p^*$ denote the current reply at any point of the new algorithm. 
 Note that  reply $p^*$ is always a $1+\epsilon$ approximately min-cost reply under the current cost function $\{d_e\}$.

 For the competitive ratio,  consider how the analysis changes when the  reply $p^*$ is only guaranteed to be a  $1+\epsilon$ approximate reply (rather than min-cost). The increase in the primal objective due to request $i$ is then at most $\frac{(1+\epsilon) y_i}{\delta}$. Adding over all $i$, we obtain $\bar{P} - I \leq \frac{1+\epsilon}{\delta} \sum y_i$, where $I=\sum_{e\in E} c_e \eta^{\alpha_e}$ is the initial primal objective and $\bar{P}$ is the final objective. Note that $I\le mB\eta$. As before, the dual objective is bounded as \[D \geq \frac{\delta}{1+\epsilon} \cdot (\bar{P} -I) - \delta^\frac{\alpha}{\alpha - 1}(\alpha - 1) \cdot \bar{P} \ge \left(\frac{\delta}{1+\epsilon}   - \delta^\frac{\alpha}{\alpha - 1}(\alpha - 1) \right)\cdot \bar{P} -I\]
Choosing $\delta$ as $(\frac{1}{\alpha (1+\epsilon)})^{\alpha-1}$ to maximize the coefficient on $\bar{P}$, we obtain $\bar{P} \leq ((1+\epsilon)\alpha)^{\alpha} \cdot (D + I)$. By weak duality, we know that $D\le \opt$ the optimal fractional value. We now bound $I\le mB\eta$ in terms of \opt. In any fractional solution $\{x_{i,p}\}$, for any request $i$, we  have  
$$\sum_e \sum_{p \in P_i: e\in p} x_{i,p} = \sum_{p\in \P_i} |p|\cdot x_{i,p}\ge  \sum_{p\in \P_i}  x_{i,p}\ge 1.$$
Averaging over all resources, some $e\in E$ has $ \sum_{p \in P_i: e\in p} x_{i,p}\ge \frac1m$, which means its load is  at least $\frac{1}{m}$ (as all weights are at least one). So  cost of any fractional solution is at least $\frac{1}{m^\alpha}$. It now follows that $I\le mB\eta \le m^{\alpha+1}B\eta\cdot \opt$. Choosing $\eta = \frac{\epsilon}{m^{1+\alpha} B}$,  the primal objective $\bar{P}\le (1+\epsilon)^{\alpha+1}\alpha^\alpha\cdot \opt$.

 We now bound  the number of queries. Note that the min-cost of any reply is at least $\eta^{\alpha-1}$ as all the loads are initially $\eta$. Moreover, any load  $\ell_e\le NB$ which implies that the maximum cost of any reply is at most $\alpha m N^{\alpha-1}B^{\alpha + 1}$. As we make a new query only when the current cost of $p^*$  increases by a factor $1+\epsilon$, the number of queries is at most 
 $$\log_{1+\epsilon} \left(\frac{\alpha m N^{\alpha-1}B^{\alpha+1}}{ \eta^{\alpha-1}}\right) = O\left(\alpha^2 \log(mnB)\right),$$
 where we used the above choice of $\eta$. So the number of queries is polynomial.

  \section{Integer Online Algorithm}\label{sec:int}
 
  We now provide an integral online algorithm for \gnd. It is well-known (see e.g. \cite{AndrewsAZZ12}) that the convex relaxation $(P)$ used in \S\ref{sec:frac} has a polynomially large integrality gap even for single-commodity routing on undirected graphs. To get around this, we use an idea from \cite{AzarE05} for load balancing, by adding additional {\em linear terms} corresponding to the $\alpha_e^{th}$ power of loads from individual requests. 
  Let $\rho\ge 1$ be a parameter to be set later. Upon the arrival of request $i$, we do the following:
  \begin{itemize}
\item Choose reply $p_i\in \P_i$ using the min-cost oracle under the costs 
\begin{equation}\label{eq:cost-integer-online}
\d_e= \alpha_e c_e \cdot \ell_e^{\alpha_e-1}\cdot w_{i,e} \,+\, \frac{\rho}{\e^{\alpha}} \cdot c_e \alpha_e w_{i,e}^{\alpha_e}, \mbox{ for each }e\in E,
\end{equation} 
where $\ell_e \,:=\, \sum_{j<i : e\in p_j} w_{j,e}$ is the current load on $e$. 
\end{itemize}
\begin{theorem}\label{thm:online}
The online \gnd algorithm has competitive ratio at most $2(\e\alpha)^{\alpha}$ where $\alpha=\max_{e\in E} \alpha_e$. 
\end{theorem}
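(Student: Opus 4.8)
The plan is to mimic the dual-fitting argument of Theorem~\ref{thm:frac-online}, but now accounting for the extra linear terms $\frac{\rho}{\e^\alpha} c_e\alpha_e w_{i,e}^{\alpha_e}$ that were added to the oracle costs. The key conceptual point is that the integral load $\bar\ell_e = \sum_{i: e\in p_i} w_{i,e}$ produced by the algorithm should be compared against the fractional optimum of $(P)$, and the added linear terms are exactly what makes the greedy choice ``robust'' enough to beat the integrality gap. Concretely, I would fix the final integral solution, set $\bar\ell_e$ to be the final load on each $e$, and define a feasible dual solution to $(D)$ of the form $z_e = \delta\cdot \bar\ell_e^{\alpha_e-1}$ and $y_i = \min_{p\in\P_i}\sum_{e\in p} w_{i,e}c_e\alpha_e z_e$, for a parameter $\delta\in(0,1]$ to be optimized (together with $\rho$). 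As before, $z_e=\delta\le 1$ on $E_1$ gives feasibility of \eqref{cons:d-UB}, and \eqref{cons:dual} holds by definition of $y_i$; so by Lemma~\ref{lem:weak-duality} it suffices to bound the integral primal objective $\bar P = \sum_e c_e\bar\ell_e^{\alpha_e}$ by $2(\e\alpha)^\alpha$ times the dual objective $D = \sum_i y_i - \sum_{e\notin E_1}\frac{c_e\alpha_e}{\beta_e} z_e^{\beta_e}$.

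The heart of the argument is charging the increase in $\bar P$ when request $i$ arrives. If $p_i$ is the chosen reply and $\ell_e$ the load just before $i$, then by convexity (a telescoping/discrete chain-rule bound, e.g. $(\ell_e+w)^{\alpha_e}-\ell_e^{\alpha_e}\le \alpha_e(\ell_e+w)^{\alpha_e-1}w$, or a slightly more careful version separating the $\ell_e^{\alpha_e-1}w$ and $w^{\alpha_e}$ contributions) the increase in $c_e\ell_e^{\alpha_e}$ is at most roughly $\alpha_e c_e\ell_e^{\alpha_e-1}w_{i,e} + (\text{const})\cdot c_e\alpha_e w_{i,e}^{\alpha_e}$. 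The first piece is exactly the first term of $\d_e$, and the second piece is absorbed by the second term of $\d_e$ provided $\rho/\e^\alpha$ is chosen large enough to dominate the combinatorial constant coming from expanding $(\ell_e+w)^{\alpha_e}$. Summing over $e\in p_i$ and using that $p_i$ minimizes $\sum_{e\in p_i}\d_e$ over $\P_i$, I bound the increase by $\sum_{e\in q_i}\d_e$ where $q_i$ is the reply achieving $y_i$; then since $\ell_e\le\bar\ell_e$ and monotonicity, $\sum_{e\in q_i}\alpha_e c_e\ell_e^{\alpha_e-1}w_{i,e}\le \frac1\delta y_i$. The new wrinkle is handling $\sum_i\sum_{e\in q_i}\frac{\rho}{\e^\alpha}c_e\alpha_e w_{i,e}^{\alpha_e}$: here I would use that for the \emph{fractional} optimum (equivalently, for the particular reply $q_i$ paired against $z_e$) one can bound $w_{i,e}^{\alpha_e}$ against $\bar\ell_e^{\alpha_e-1}w_{i,e}$ when $w_{i,e}\le\bar\ell_e$, and otherwise the single-request term is itself a constant fraction of $c_e\bar\ell_e^{\alpha_e}$ — so this sum is also $O(y_i)/\delta + (\text{small const})\cdot\bar P$, or more cleanly it folds into the $y_i$ term directly because $q_i$ was chosen to minimize against $z_e=\delta\bar\ell_e^{\alpha_e-1}$. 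Adding everything, $\bar P \le \frac{c_1}{\delta}\sum_i y_i$ for an explicit constant $c_1$ depending on $\rho$.

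Finally I would bound the $z$-contribution to $D$ exactly as in Theorem~\ref{thm:frac-online}: $\sum_{e\notin E_1}\frac{c_e\alpha_e}{\beta_e}z_e^{\beta_e} = \sum_{e\notin E_1}\delta^{\beta_e}c_e(\alpha_e-1)\bar\ell_e^{\alpha_e}\le \delta^{\alpha/(\alpha-1)}(\alpha-1)\bar P$, using $\delta\le 1$ and $\beta_e\ge 1+\frac1{\alpha-1}$. Combining, $D \ge (\frac{\delta}{c_1} - (\alpha-1)\delta^{\alpha/(\alpha-1)})\bar P$, and optimizing $\delta$ (of the form $\delta = (c_1\alpha)^{-(\alpha-1)}$ up to constants) yields $\bar P \le (c_1\,\e\,\alpha)^\alpha\cdot D$ after using $(1-\frac1\alpha)^{\alpha-1}\ge \frac1\e$; with the right choice of $\rho$ the constant works out to $c_1$ such that $(c_1\e\alpha)^\alpha$ is at most $2(\e\alpha)^\alpha$, i.e. the $w_{i,e}^{\alpha_e}$ penalty only costs a factor $2$ overall. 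The main obstacle I anticipate is the second bullet in the charging step: making the bound on $\sum_i\sum_e w_{i,e}^{\alpha_e}$ terms genuinely tight against the dual (so that the penalty contributes only the factor-$2$ loss rather than an extra $\alpha^{\Theta(\alpha)}$), which requires carefully splitting into the cases $w_{i,e}\le\bar\ell_e$ and $w_{i,e}>\bar\ell_e$ and choosing $\rho = \e^\alpha\cdot\Theta(1)$ so that the penalty term in $\d_e$ exactly matches the constant from $(\ell_e+w_{i,e})^{\alpha_e}$-expansion while still being small enough relative to $y_i$.
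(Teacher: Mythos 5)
Your dual-fitting is against the \emph{original} dual $(D)$, and this is where the argument breaks: you define $y_i = \min_{p\in\P_i}\sum_{e\in p} w_{i,e}c_e\alpha_e z_e$ and $z_e = \delta\cdot\bar\ell_e^{\alpha_e-1}$, both feasible for $(D)$, and then try to show $\bar P \le 2(\e\alpha)^\alpha D$. By Lemma~\ref{lem:weak-duality} that would give $\bar P \le 2(\e\alpha)^\alpha \cdot \opt_{\mathrm{frac}}(P)$ where $\opt_{\mathrm{frac}}(P)$ is the fractional optimum of $(P)$. But the paper explicitly points out (opening of \S\ref{sec:int}) that $(P)$ has a polynomially large integrality gap even for single-commodity routing, and any integral solution (including the one the algorithm produces) has cost at least the integral optimum. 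So the inequality you are trying to prove is simply false on the integrality-gap instance, no matter how the extra $w_{i,e}^{\alpha_e}$ penalty in the oracle cost $\d_e$ is tuned. The extra penalty in the greedy rule cannot ``beat'' the integrality gap of $(P)$; the gap is a property of the relaxation itself, and dual fitting against $(D)$ is stuck at that gap. This is precisely why the case-splitting you anticipate for the $\sum_i\sum_e w_{i,e}^{\alpha_e}$ term never closes cleanly.

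The paper's actual proof avoids this by \emph{strengthening the primal relaxation}: it augments the objective of $(P)$ with the linear terms $\frac{c_e\alpha_e}{\e^\alpha}\sum_i w_{i,e}^{\alpha_e}\sum_{p\ni e} x_{i,p}$ (objective~\eqref{eq:new-obj}), proves in Lemma~\ref{lem:new-opt} that the optimum of this \emph{stronger} relaxation is still at most $(1+\alpha\e^{-\alpha})\cdot\opt$ times the \emph{integral} optimum (this is where the integer/fractional comparison happens, bypassing the integrality gap of $(P)$), and then dual-fits against the dual $(D')$ of the strengthened program. Concretely, the dummy resources $E'$ with $\alpha_{e'}=1$, $z_{e'}=1$, $w_{i,e'}=\frac{c_e\alpha_e}{\e^\alpha}w_{i,e}^{\alpha_e}$ make the dual variable $y_i$ equal to $\min_{p\in\P_i}\sum_{e\in p}\left(w_{i,e}c_e\alpha_e z_e+\frac{c_e\alpha_e}{\e^\alpha}w_{i,e}^{\alpha_e}\right)$, which is (up to the factor $\rho$) exactly the oracle cost $\d_e$ summed over $p$ — so the charging step folds the $w^{\alpha_e}$ terms directly into $y_i$ rather than leaving them dangling, and the marginal-increase bound uses $(X+Y)^{\alpha-1}\le\e X^{\alpha-1}+\alpha^{\alpha-1}Y^{\alpha-1}$ (Lemma~4.1 of \cite{AwerbuchAGKKV95}) to get the tight $\e^\alpha$ dependence. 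Your structural plan (charge increase against chosen reply, compare to the $y_i$-minimizer, bound the $z$-part as in the fractional case, optimize the free parameter) is the right shape, but you must carry it out for $(D')$, not $(D)$, and correspondingly extend $y_i$ to include the $w^{\alpha_e}$ contribution.
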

We prove this result in the rest of this section. 
Let $A_e$ denote the final load on each resource $e\in E$. The online algorithm's objective is then $A:=\sum_{e} c_e\cdot A_e^{\alpha_e}$.

We will use a different (stronger) convex relaxation for \gnd and relate $A$ to the new relaxation. The new relaxation has the same constraints in $(P)$ but the objective is now:
\begin{equation}
    \label{eq:new-obj}
\sum_{e\in E} c_e\cdot \left( \sum_{i=1}^N w_{i,e} \sum_{p\in \P_i: e\in p} x_{i,p}\right)^{\alpha_e}\,\,+\,\,\sum_{e\in E} \frac{c_e \alpha_e }{\e^{\alpha}} \cdot \sum_{i=1}^N w_{i,e}^{\alpha_e}\sum_{p\in \P_i: e\in p} x_{i,p} \end{equation}
\begin{lemma} \label{lem:new-opt}
The optimal value of the new  convex program with objective~\eqref{eq:new-obj} is at most $(1+\alpha\e^{-\alpha})\cdot \opt$, where \opt is the optimal value of the (integral) \gnd instance. 
\end{lemma}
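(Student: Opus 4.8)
The plan is ``reverse dual fitting'': exhibit one feasible (in fact integral) point of the new convex program whose objective value is at most $(1+\alpha\e^{-\alpha})\cdot\opt$; since the program is a minimization, this bounds its optimum. Concretely, I would take an optimal integral solution of the \gnd instance, i.e.\ a choice of reply $p_i^*\in\P_i$ for each $i\in[N]$, and let $\ell_e^*:=\sum_{i:\,e\in p_i^*} w_{i,e}$ denote the resulting load on each $e\in E$, so that $\opt=\sum_{e\in E} c_e\cdot(\ell_e^*)^{\alpha_e}$. Encode this as the $0/1$ vector with $x_{i,p_i^*}=1$ and all other coordinates $0$; it clearly satisfies the covering constraints~\eqref{cons:primal}, hence is feasible for the new program.

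Next I would evaluate objective~\eqref{eq:new-obj} at this point. Its first sum is exactly $\sum_{e} c_e(\ell_e^*)^{\alpha_e}=\opt$. For the second sum, $\sum_{e}\frac{c_e\alpha_e}{\e^\alpha}\sum_{i:\,e\in p_i^*} w_{i,e}^{\alpha_e}$, the only nontrivial input is the elementary super-additivity of the map $x\mapsto x^{\beta}$ for $\beta\ge1$ on the nonnegative reals: $\sum_j a_j^{\beta}\le\bigl(\sum_j a_j\bigr)^{\beta}$, which follows from $(a+b)^\beta\ge a^\beta+b^\beta$ (itself immediate from $t^\beta\le t$ for $t\in[0,1]$) together with induction on the number of terms. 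Applying this with $\beta=\alpha_e$ and the $a_j$'s equal to the weights $\{w_{i,e}: e\in p_i^*\}$ gives $\sum_{i:\,e\in p_i^*} w_{i,e}^{\alpha_e}\le(\ell_e^*)^{\alpha_e}$ for every $e\in E$.

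Combining the two bounds and using $\alpha_e\le\alpha$, the second sum is at most $\sum_{e}\frac{c_e\alpha_e}{\e^\alpha}(\ell_e^*)^{\alpha_e}\le\alpha\e^{-\alpha}\sum_{e} c_e(\ell_e^*)^{\alpha_e}=\alpha\e^{-\alpha}\cdot\opt$. Hence the objective at this point is at most $(1+\alpha\e^{-\alpha})\cdot\opt$, which upper-bounds the optimum of the new convex program, as claimed.

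I do not anticipate a real obstacle: the lemma is essentially a one-line consequence of plugging in the optimal integral solution and invoking super-additivity of powers with exponent at least one. The only thing worth being careful about is that the extra (linear) term in~\eqref{eq:new-obj} sums the per-request contributions $w_{i,e}^{\alpha_e}$, and it is exactly super-additivity that lets this be dominated by the aggregated-load term $(\ell_e^*)^{\alpha_e}$; the $\e^{-\alpha}$ prefactor then makes it a lower-order correction, which is precisely what matches the $\e^{-\alpha}$ scaling chosen in the cost function~\eqref{eq:cost-integer-online}.
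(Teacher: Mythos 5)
Your proposal is correct and follows essentially the same route as the paper: plug in the $0/1$ indicator of the optimal integral solution, observe that the first sum equals $\opt$, and bound the linear correction term via $\sum_{i:e\in p_i^*} w_{i,e}^{\alpha_e}\le(\ell_e^*)^{\alpha_e}$ (which the paper invokes implicitly as a consequence of the $0/1$ structure, and which you correctly identify as super-additivity of $x\mapsto x^{\alpha_e}$ for $\alpha_e\ge 1$). The only difference is that you spell out the super-additivity step and the final $\alpha_e\le\alpha$ bound more explicitly than the paper does.
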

\begin{proof}
Consider an  optimal solution to \gnd with objective \opt. We set a corresponding solution for ($P$) by setting $x_{i,p}$ to 1 if $p$ is the reply used to satisfy request $i$ and 0 otherwise. Using the fact that each $x_{i,p} $ is either 0 or 1,  we have for each $e$,  \[\sum_{i=1}^N w_{i,e}^{\alpha_e}\sum_{p\in \P_i: e\in p} x_{i,p} \leq \left(\sum_{i=1}^N w_{i,e} \sum_{p\in \P_i: e\in p} x_{i,p}\right)^{\alpha_e}\]  So, the objective of the new relaxation is at most \[(1+\frac{\alpha}{\e^\alpha})\sum_{e\in E} c_e\cdot \left( \sum_{i=1}^N w_{i,e} \sum_{p\in \P_i: e\in p} x_{i,p}\right)^{\alpha_e} = (1+\frac{\alpha}{\e^\alpha})\opt,\]
which proves the lemma. 
\end{proof}
To make notation simpler, for the analysis we imagine adding dummy resources $E'=\{e' : e\in E\}$ corresponding to the second term in the new objective. We set $\alpha_{e'} := 1$, $c_{e'} := 1$  and $w_{i,e'} := \frac{c_e \alpha_e }{\e^{\alpha}} w_{i,e}^{\alpha_e}$ for all $i\in [N]$ and $e\in E$. Moreover, we extend each reply $p\in \P_i$ so that it contains {\em both} copies $e,e'$ of each resource $e\in p$. The new reply collections are referred to as $\{\P'_i\}_{i=1}^N$.  The dual of the new convex program, denoted $(D')$, is given below.
  \begin{align}
  \max\quad&\sum_{i=1}^N y_i \,-\, \sum_{e\in E\setminus E_1} \frac{ c_e\alpha_e}{\beta_e} \cdot z_e^{\beta_e}\notag\\
  \mbox{s.t.}\quad& \sum_{e\in p}w_{i,e} c_e \alpha_e\cdot z_e \ge y_{i},\qquad \forall p\in \P'_i,\, \forall i\in [N] \label{cons:new-dual}\\
  & z_e \le 1, \qquad\qquad\qquad\qquad\forall e\in E_1\cup E' \label{cons:new-d-UB}\\
  &\mathbf{y}, \mathbf{z} \ge  \mathbf{0}.\notag
\end{align}
Above,  $E_1=\{e\in E: \alpha_e=1\}$. Note that all the dummy resources $E'$ have the exponent $\alpha_e=1$: so they do not appear in the second term of the dual objective. 

Define the dual solution:
  $$z_e := \frac1\rho \cdot A_e^{\alpha_e-1},\qquad \forall e\in E.$$
  $$z_{e'} := 1,\qquad \forall e'\in E'.$$
  $$y_i := \min_{p'\in \P'_i} \sum_{e\in p'} w_{i,e}c_e\alpha_e\cdot z_e= \min_{p\in \P_i} \sum_{e\in p} \left( w_{i,e}c_e\alpha_e\cdot z_e + \frac{c_e \alpha_e }{\e^{\alpha}} w_{i,e}^{\alpha_e}\right),\qquad \forall i\in [N].$$
The second equality above (for $y_i$) follows from the definitions of the new reply-collection $\P_i'$  and weights $w_{i,e'}$, and the setting $z_{e'}=1$ for $e'\in E'$.   Note that dual-constraint~\eqref{cons:new-d-UB} is satisfied as $z_e=\delta\le 1$ for all $e\in E_1$ and $z_{e'}=1$ for all $e'\in E'$. Moreover, \eqref{cons:new-dual} is satisfied by definition of $y$. So $(y,z)$ is a feasible dual solution. For each request $i$, let $q_i\in \P_i$ denote the reply that achieves the minimum cost in the definition of $y_i$ above.
We now relate $A$ with the dual objective $D$.

Consider the algorithm when some request $i$ arrives. Let $\ell_e$ denote the load on each $e\in E$ before request $i$ is assigned. Recall that $p_i\in \P_i$ is the selected reply. Then, the increase in the algorithm's objective, $(\Delta A)_i$ equals:   
\begin{align}
&= \sum_{e\in p_i} c_e \left( (\ell_e+ w_{i,e})^{\alpha_e} - \ell_e^{\alpha_e} \right) \le  \sum_{e\in p_i} c_e   \alpha_e   (\ell_e+ w_{i,e})^{\alpha_e-1} w_{i,e} \label{eq:int-dual-1} \\
&\le\sum_{e\in p_i} c_e   \alpha_e   w_{i,e} \left(\e \cdot \ell_e^{\alpha_e-1} + \alpha_e^{\alpha_e-1} \cdot w_{i,e}^{\alpha_e-1} \right) = \e\cdot \sum_{e\in p_i} \left( c_e   \alpha_e   w_{i,e} \ell_e^{\alpha_e-1} + \frac{1}{\e}c_e\alpha_e^{\alpha_e} w_{i,e}^{\alpha_e} \right) \label{eq:int-dual-2} \\
&\le \e\cdot \sum_{e\in p_i} \left( c_e   \alpha_e   w_{i,e} \ell_e^{\alpha_e-1} + \frac{\rho}{\e^{\alpha}}\cdot c_e \alpha_e w_{i,e}^{\alpha_e} \right) \le \e\cdot \sum_{e\in q_i} \left( c_e   \alpha_e   w_{i,e} \ell_e^{\alpha_e-1} + \frac{\rho}{\e^{\alpha}}\cdot c_e\alpha_e  w_{i,e}^{\alpha_e} \right) \label{eq:int-dual-3}\\
 &=   \e\rho \cdot \sum_{e\in q_i} \left( \frac1\rho \cdot c_e   \alpha_e   w_{i,e} \ell_e^{\alpha_e-1} +  \frac{c_e \alpha_e }{\e^{\alpha}} w_{i,e}^{\alpha_e} \right)\le    \e\rho \cdot \sum_{e\in q_i} \left( \frac1\rho \cdot c_e   \alpha_e   w_{i,e} A_e^{\alpha_e-1} +  \frac{c_e \alpha_e }{\e^{\alpha}} w_{i,e}^{\alpha_e} \right)\label{eq:int-dual-4} \\
&= \e\rho \cdot \sum_{e\in q_i} \left( w_{i,e}c_e\alpha_e\cdot z_e + \frac{c_e \alpha_e }{\e^{\alpha}} w_{i,e}^{\alpha_e}\right)  =  \e\rho \cdot  y_i. \label{eq:int-dual-5}
\end{align} The inequality in \eqref{eq:int-dual-1} uses convexity of the $x^{\alpha_e}$ function. The inequality in \eqref{eq:int-dual-2} uses the inequality $(X+Y)^{\alpha-1} \le \e\cdot X^{\alpha-1} + \alpha^{\alpha-1}\cdot Y^{\alpha-1}$ for $\alpha\ge 1$ and $X,Y\ge 0$, which follows from Lemma~4.1 in \cite{AwerbuchAGKKV95} (by setting $c=\e$). The first inequality in \eqref{eq:int-dual-3} uses $\rho\ge (\e \alpha)^{\alpha-1}$ which we will ensure. The second  inequality in \eqref{eq:int-dual-3} uses the choice of $p_i$ under the costs \eqref{eq:cost-integer-online}. The   inequality in \eqref{eq:int-dual-4} uses the fact that loads are monotonically non-decreasing. The equalities in \eqref{eq:int-dual-5} use the 
definition of reply $q_i$ and choice of dual variables $y_i$ and $z_e$.
 Adding over all $i$,  
$$A \le \e\rho \cdot  \sum_{i=1}^N y_i.$$
Now, consider the contribution of the $z$-variables to the dual objective:
$$\sum_{e\in E\setminus E_1} \frac{c_e\alpha_e}{\beta_e}\cdot z_e^{\beta_e} = \sum_{e\in E\setminus E_1} \rho^{-\beta_e}\frac{c_e\alpha_e}{\beta_e} \left(A_e^{\alpha_e-1}\right)^{\beta_e}= \sum_{e\in E\setminus E_1} \rho^{-\beta_e} c_e(\alpha_e-1)  A_e^{\alpha_e}\le \rho^{-\frac{\alpha}{\alpha-1}}(\alpha-1) \sum_{e\in E\setminus E_1}c_e A_e^{\alpha_e}, $$
which follows the same way as for the fractional online algorithm. 
Therefore, the dual objective is:
$$D = \sum_{i=1}^N y_i - \sum_{e\in E\setminus E_1} \frac{c_e\alpha_e}{\beta_e}\cdot z_e^{\beta_e} \ge \left(\frac{1}{\e \rho} - \rho^{-\frac{\alpha}{\alpha-1}}(\alpha-1) \right)\cdot A.$$
 Finally, choosing $\rho=(\e \alpha)^{\alpha-1}$, we obtain $A\le (\e\alpha)^\alpha\cdot D$. Combined with the observation that  $D\le (1+\alpha\e^{-\alpha})\opt$ (by Lemma~\ref{lem:new-opt}), we obtain Theorem~\ref{thm:online}. 

\subsection{Using Approximate Min-Cost Replies}

\def\tA{\overline{A}}

Here we consider the situation where an exact min-cost reply cannot be computed efficiently. This is indeed the case in some applications. We extend our online algorithm so that it also works with approximately min-cost replies. Moreover, we obtain a stronger guarantee for the linear terms in the objective, which will be used in proving our main result (see \S\ref{sec:overall}). Recall that $E_1\sse E$ denotes the resources with exponent $\alpha_e=1$.
\begin{theorem}\label{thm:apx-online}
Assume that there is a $\tau$-approximation algorithm for the min-cost oracle in \gnd. Then, there is a polynomial time $2(\e \alpha\tau )^\alpha$-competitive online algorithm for \gnd. In fact, if $L$ and $H$ denote the costs incurred by the algorithm on resources in $E_1$ and $E\setminus E_1$ respectively, then $L\le 2\tau\cdot \opt$ and $H\le 2(\e\alpha\tau)^\alpha\cdot \opt$.
\end{theorem}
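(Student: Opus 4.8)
The plan is to keep the algorithm of Theorem~\ref{thm:online} essentially intact but to make three changes: (i) obtain the reply $p_i$ from the $\tau$-approximate oracle rather than an exact one; (ii) on a linear resource $e\in E_1$ charge the \emph{exact} marginal cost $\d_e=c_ew_{i,e}$ (that is, drop the $\rho/\e^\alpha$ inflation there, which is harmless since the relaxation restricted to $E_1$ has no integrality gap); and (iii) re-tune the parameter to $\rho=(\e\alpha\tau)^{\alpha-1}$. In the strengthened relaxation we add the linear penalty terms only for $e\in E\setminus E_1$; since we are adding a subset of the old penalty terms, the argument of Lemma~\ref{lem:new-opt} still yields $\opt'\le(1+\alpha\e^{-\alpha})\opt\le 2\opt$ for this program's optimum $\opt'$. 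The dual solution is $z_e=\rho^{-1}A_e^{\alpha_e-1}$ for $e\in E\setminus E_1$, $z_e=\rho^{-1}$ for $e\in E_1$, $z_{e'}=1$ for the dummy resources, and $y_i$ the minimum of the dual-constraint left-hand side over the extended replies; this is dual-feasible because $\rho\ge1$.

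For the bound on all resources I would replay the chain \eqref{eq:int-dual-1}--\eqref{eq:int-dual-5}, now bounding the whole increase $(\Delta A)_i=(\Delta L)_i+(\Delta H)_i$ at once by $\e\sum_{e\in p_i}\d_e$ (using $\e\ge1$ on the $E_1$ part, and convexity together with $\rho\ge(\e\alpha)^{\alpha-1}$ on the rest). The single new feature is that optimality of $p_i$ only holds up to $\tau$, so this is $\le \e\tau\sum_{e\in q_i}\d_e\le\e\tau\rho\,y_i$, the last step using $\ell_e\le A_e$ and the definitions of $z,y$; summing gives $A\le\e\tau\rho\sum_iy_i$. The $z$-contribution to the dual is unchanged -- it is independent of the oracle and supported on $E\setminus E_1$ -- namely at most $\rho^{-\alpha/(\alpha-1)}(\alpha-1)A$, so $D\ge\bigl(\tfrac1{\e\tau\rho}-\rho^{-\alpha/(\alpha-1)}(\alpha-1)\bigr)A$; with $\rho=(\e\alpha\tau)^{\alpha-1}$ the coefficient equals exactly $(\e\alpha\tau)^{-\alpha}$, hence $A\le(\e\alpha\tau)^\alpha D\le(\e\alpha\tau)^\alpha\opt'\le 2(\e\alpha\tau)^\alpha\opt$. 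In particular $H\le A\le 2(\e\alpha\tau)^\alpha\opt$, and polynomiality is immediate because the integral algorithm queries the oracle exactly once per request, so there are only $N$ polynomial-time calls and no continuous-time discretization is needed.

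The sharper bound $L\le 2\tau\opt$ is the crux. The idea is to certify $L$ by a \emph{second} dual solution that zeroes out the penalty -- legitimate precisely because that term is supported on $E\setminus E_1$ -- taking $z_e=0$ on $E\setminus E_1$ and $z_e=1$ on $E_1$, so that its objective is simply $\sum_i\tilde y_i$ with $\tilde y_i=\min_{p\in\P_i}\bigl(\sum_{e\in p\cap E_1}c_ew_{i,e}+\e^{-\alpha}\sum_{e\in p\setminus E_1}c_e\alpha_ew_{i,e}^{\alpha_e}\bigr)$, which is again $\le\opt'\le 2\opt$ by weak duality. It then remains to show $L=\sum_i\sum_{e\in p_i\cap E_1}c_ew_{i,e}\le\tau\sum_i\tilde y_i$, and here the exact-marginal-cost treatment of $E_1$ is what lets us relate $p_i$'s linear cost to the oracle guarantee. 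The obstacle I anticipate is the discrepancy between the algorithm's greedy cost $\d_e$ on $E\setminus E_1$ (which carries the congestion term $\alpha_ec_e\ell_e^{\alpha_e-1}w_{i,e}$ and the factor $\rho$) and the much lighter weights in $\tilde y_i$: a naive per-request comparison, or a comparison against the optimum's replies, leaks a term proportional to $H$ (or even to $\rho H$), which is far too large. Overcoming this seems to require either amortizing $(\Delta L)_i$ against the incremental growth of the dual objective in the primal--dual style rather than against $\opt$ globally, or exploiting the $E_1$ versus $E\setminus E_1$ split together with the re-tuned $\rho$ to drive the leaked coefficient down to a small constant; pinning down this absorption is the main technical work of the proof.
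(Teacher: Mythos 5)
Your outline correctly reproduces the $H\le 2(\e\alpha\tau)^\alpha\opt$ bound, and you have correctly diagnosed that your construction does not by itself give $L\le 2\tau\opt$. The gap is precisely where you flag it, and the missing idea is a cost-scaling trick that goes in the \emph{opposite} direction from yours. You propose to \emph{drop} the inflation on $E_1$ and charge the bare marginal $\d_e=c_ew_{i,e}$; the paper instead \emph{inflates} the $E_1$ cost to $\bar{\d}_e=\rho c_ew_{i,e}$, and correspondingly tracks the modified quantity $\tA=\e\rho L+H$ rather than $A=L+H$. With that scaling, the per-request increase $(\Delta\tA)_i=\e\rho\sum_{e\in p_i\cap E_1}c_ew_{i,e}+(\Delta H)_i$ lines up term-by-term with $\e\sum_{e\in p_i}\bar{\d}_e\le\e\tau\sum_{e\in q_i}\bar{\d}_e\le\e\tau\rho\,y_i$, where the dual is set with $z_e=1$ on $E_1\cup E''$ and $z_e=\rho^{-1}A_e^{\alpha_e-1}$ on $E\setminus E_1$. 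Summing gives $\tA\le\e\tau\rho\sum_iy_i$, and since the $z$-penalty is supported only on $E\setminus E_1$,
\[
D\ \ge\ \frac{\tA}{\e\tau\rho}-\rho^{-\alpha/(\alpha-1)}(\alpha-1)H\ =\ \frac{L}{\tau}+\Bigl(\frac{1}{\e\tau\rho}-\rho^{-\alpha/(\alpha-1)}(\alpha-1)\Bigr)H,
\]
which with $\rho=(\e\tau\alpha)^{\alpha-1}$ simplifies to $D\ge L/\tau+H/(\e\tau\alpha)^\alpha$. Both claimed bounds then drop out simultaneously from one dual solution.

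Contrast this with your variant: because you set $z_e=\rho^{-1}$ on $E_1$ and use the uninflated cost there, the coefficient you obtain on $L$ is only $1/(\e\tau\rho)=1/((\e\tau)^\alpha\alpha^{\alpha-1})$, yielding a bound on $L$ of roughly $(\e\tau)^\alpha\alpha^{\alpha-1}\opt$ rather than $\tau\opt$. Your instinct to recover the sharp bound via a \emph{second} dual with $z$ supported on $E_1$ fails for exactly the reason you anticipate: the algorithm's greedy choice of $p_i$ is driven by the heavy $E\setminus E_1$ costs (which carry the $\rho/\e^\alpha$ factor), so there is no per-request inequality relating $\sum_{e\in p_i\cap E_1}c_ew_{i,e}$ to $\tilde y_i$ without leaking a term proportional to $H$ or $\rho H$. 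Inflating the $E_1$ cost by $\rho$ in the oracle (as the paper does) is precisely what makes the $E_1$ and $E\setminus E_1$ contributions to $\bar{\d}$ commensurate, so the greedy internalizes the $E_1$ cost at the right scale and the single dual fitting absorbs everything cleanly. Aside from this point, your setup (extended resources $E''$, penalty only on $E\setminus E_1$, $\opt'\le(1+\alpha\e^{-\alpha})\opt$, and polynomiality via one oracle call per request) matches the paper.
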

\begin{proof}
This algorithm is a slight modification of the previous one. Upon arrival of request $i\in [N]$, we select the reply $p_i\in \P_i$ returned by the $\tau$-approximate min-cost   oracle under costs:
\begin{equation}\label{eq:cost-int-LMP}
\bar{\d}_e := \left\{ \begin{array}{ll}
 \alpha_e c_e \cdot \ell_e^{\alpha_e-1}\cdot w_{i,e} \,+\, \frac{\rho}{\e^{\alpha}} \cdot c_e \alpha_e w_{i,e}^{\alpha_e} &\mbox{ if } e\in E\setminus E_1\\
 \rho c_e w_{i,e} & \mbox{ if }e\in E_1
\end{array}\right..
\end{equation}
Note that the only difference from the costs  \eqref{eq:cost-integer-online} in Theorem~\ref{thm:online} is in the cost setting for $E_1$. We will also set the parameter $\rho\ge 1$ differently. We only prove the second statement in the theorem, which clearly implies the first statement.

As before, let $A_e$ denote the algorithm's final load on each resource $e\in E$. Let  $L:=\sum_{e\in E_1} c_e A_e$ and $H:=\sum_{e\in E\setminus E_1} c_e A_e^{\alpha_e}$ denote the costs incurred by the algorithm on resources in $E_1$ and $E\setminus E_1$ respectively. Note that the algorithm's cost $A=L+H$. We will bound the modified objective $\tA:=\e \rho\cdot L +H$. In the analysis, we will use a slightly different relaxation for \gnd.  The new relaxation has the same constraints in $(P)$ with objective:
\begin{equation}
    \label{eq:improved-obj}
\sum_{e\in E} c_e\cdot \left( \sum_{i=1}^N w_{i,e} \sum_{p\in \P_i: e\in p} x_{i,p}\right)^{\alpha_e}\,\,+\,\,\sum_{e\in E\setminus E_1} \frac{c_e \alpha_e }{\e^{\alpha}} \sum_{i=1}^N w_{i,e}^{\alpha_e}\sum_{p\in \P_i: e\in p} x_{i,p} \end{equation}
As in Lemma~\ref{lem:new-opt}, the optimal value of this new  convex program is at most $(1+\alpha\e^{-\alpha})\cdot \opt$, where \opt is the optimal value of the (integral) \gnd instance. 
We imagine adding dummy resources $E''=\{e' : e\in E\setminus E_1\}$ corresponding to the second term in the new objective \eqref{eq:improved-obj}. We set $\alpha_{e'} := 1$, $c_{e'} := 1$  and $w_{i,e'} := \frac{c_e \alpha_e }{\e^{\alpha}} w_{i,e}^{\alpha_e}$ for all $i\in [N]$ and $e\in E\setminus E_1$. Moreover, we extend each reply $p\in \P_i$ so that it contains {\em both} copies $e,e'$ of each resource $e\in p\setminus E_1$. The new reply collections are referred to as $\{\P''_i\}_{i=1}^N$.  The dual of the new convex program, denoted $(D'')$, is given below.
  \begin{align*}
  \max\quad&\sum_{i=1}^N y_i \,-\, \sum_{e\in E\setminus E_1} \frac{ c_e\alpha_e}{\beta_e} \cdot z_e^{\beta_e} \\
  \mbox{s.t.}\quad& \sum_{e\in p''}w_{i,e} c_e \alpha_e\cdot z_e \ge y_{i},\qquad \forall p''\in \P''_i,\, \forall i\in [N]  \\
  & z_e \le 1, \qquad\qquad\qquad\qquad\forall e\in E_1\cup E''  \\
  &\mathbf{y}, \mathbf{z} \ge  \mathbf{0}. 
\end{align*}
 Note that $\alpha_{e'}=1$ for all dummy resources $e'\in E''$: so they do not appear in the second term of the dual objective. 
We define the following dual solution:
  $$z_e := \frac1\rho \cdot A_e^{\alpha_e-1},\qquad \forall e\in E\setminus E_1.$$
  $$z_{e'} := 1,\qquad \forall e'\in E_1\cup E''.$$
  $$y_i := \min_{p''\in \P''_i} \sum_{e\in p''} w_{i,e}c_e\alpha_e\cdot z_e= \min_{p\in \P_i} \left( \sum_{e\in p \cap E_1}c_e w_{i,e} + \sum_{e\in p\setminus E_1} \left( c_e\alpha_e w_{i,e}\cdot z_e + \frac{c_e \alpha_e }{\e^{\alpha}} w_{i,e}^{\alpha_e}\right)\right),\quad \forall i\in [N].$$
The only difference from the choice in Theorem~\ref{thm:online} is that now $z_e=1$ for all $e\in E_1\cup E''$. It is easy to see that this is feasible for the dual program $(D'')$.  Let $q_i\in \P_i$ denote the reply that achieves the min-cost in the definition of $y_i$.

Now consider the increase in $\tA$ when request $i$ arrives. Let $\ell_e$ denote the load on $e\in E$ before request $i$ is assigned. 
 \begin{align}
(\Delta \tA)_i &= \e \rho \sum_{e\in p_i\cap E_1} c_e w_{i,e} \,+\, \sum_{e\in p_i\setminus E_1} c_e\left( (\ell_e+ w_{i,e})^{\alpha_e} - \ell_e^{\alpha_e} \right) \notag\\
&\le \e \rho \sum_{e\in p_i\cap E_1} c_e w_{i,e} \,+\,  \e\cdot \sum_{e\in p_i\setminus E_1} \left( c_e   \alpha_e   w_{i,e} \ell_e^{\alpha_e-1} + \frac{\rho}{\e^{\alpha}}\cdot c_e \alpha_e w_{i,e}^{\alpha_e} \right) \label{eq:improved-2}\\
&= \e \sum_{e\in p_i} \bar{\d}_e \le \e \tau \sum_{e\in q_i} \bar{\d}_e = \e \tau \rho \left(\sum_{e\in q_i\cap E_1} c_e w_{i,e} +   \sum_{e\in q_i\setminus E_1} \left( \frac{c_e   \alpha_e   }\rho w_{i,e} \ell_e^{\alpha_e-1} + \frac{c_e \alpha_e }{\e^{\alpha}} w_{i,e}^{\alpha_e} \right)\right) \label{eq:improved-3}\\
&\le \e \tau \rho \left(\sum_{e\in q_i\cap E_1} c_e w_{i,e} +   \sum_{e\in q_i\setminus E_1} \left( \frac1\rho c_e   \alpha_e   w_{i,e} A_e^{\alpha_e-1} + \frac{c_e \alpha_e }{\e^{\alpha}} w_{i,e}^{\alpha_e} \right)\right)  =  \e\tau \rho \cdot  y_i. \label{eq:improved-4}
\end{align} 
Inequality \eqref{eq:improved-2} follows by the same calculations as in \eqref{eq:int-dual-1}-\eqref{eq:int-dual-3}. The equalities in \eqref{eq:improved-3} is by definition of the costs~\eqref{eq:cost-int-LMP}, and the inequality in \eqref{eq:improved-3} is by choice of $p_i$. The inequality in  \eqref{eq:improved-4} is by the monotonicity of loads. 
So we obtain $\tA \le \e\tau\rho \sum_{i=1}^N y_i$. 
Moreover, we can bound the contribution of the $z$-variables in the exact same way as before, to obtain: 
$$\sum_{e\in E\setminus E_1} \frac{c_e\alpha_e}{\beta_e}\cdot z_e^{\beta_e}  \le \rho^{-\frac{\alpha}{\alpha-1}}(\alpha-1) \sum_{e\in E\setminus E_1}c_e A_e^{\alpha_e} = \rho^{-\frac{\alpha}{\alpha-1}}(\alpha-1) \cdot H.$$ 
Therefore, the dual objective is:
$$D = \sum_{i=1}^N y_i - \sum_{e\in E\setminus E_1} \frac{c_e\alpha_e}{\beta_e}\cdot z_e^{\beta_e} \ge \frac{\tA}{\e\tau \rho} - \rho^{-\frac{\alpha}{\alpha-1}}(\alpha-1) \cdot H = \frac{L}{\tau} + \left(\frac{1}{\e \tau\rho} - \rho^{-\frac{\alpha}{\alpha-1}}(\alpha-1)  \right)H. $$
 Finally, choosing $\rho=(\e\tau \alpha)^{\alpha-1}$, we obtain: 
 $$\frac{L}{\tau} \,+\, \frac{H}{(\e\tau\alpha)^\alpha} \le D\le (1+\alpha\e^{-\alpha})\opt,$$ where we used that the optimal value of $(D'')$ is at most $(1+\alpha\e^{-\alpha})\opt\le 2\opt$. Hence, $L\le 2\tau\cdot \opt$ and $H\le 2(\e\alpha\tau)^\alpha\cdot \opt$, which   completes the proof. 
\end{proof}

\section{Application to \gnd with (D)oS Costs}\label{sec:overall}
\def\I{\ensuremath{{\cal I}}\xspace}
\def\J{\ensuremath{{\cal J}}\xspace}

We now complete the proof of our main result (Theorem~\ref{thm:main-1}). Given an instance \I  of \gnd with (D)oS costs as in \eqref{eq:energy}, we use Lemma~\ref{lem:fn-apx} to define a new instance \J of \gnd with power cost functions, as follows. For each original resource $e\in E$, we have two copies $e_1$ and $e_a$. Let $E_1:=\{e_1:e\in E\}$ and $E_a:=\{e_a:e\in E\}$: so the resources in \J are $E'=E_1\cup E_a$. Define scalars $c_{e_1}:=\xi_e q_e^{\alpha_e-1}$ and $c_{e_a}:=\xi_e$ for all $e\in E$. Also, define exponents $\alpha_{e_1}:=1$ and $\alpha_{e_a}:=\alpha_e$ for all $e\in E$. The weighted power functions in instance \J are $g_{r}(x):= c_{r}\cdot  x^{\alpha_{r}}$ for all resources $r\in E'$.  The reply-collections are extended so that for each reply $p\in \P_i$ in \I, there is a corresponding reply in \J that contains {\em both} copies of resources $e\in p$. For each $e\in E$, note that function $h_e(x)$ used in Lemma~\ref{lem:fn-apx} is $h_e(x) = g_{e_1}(x)+g_{e_a}(x)$. Using the first inequality in Lemma~\ref{lem:fn-apx}, the optimal value of instance \J is $\opt_\J \le 2\cdot \opt_\I$. Now, using Theorem~\ref{thm:apx-online} on instance \J, we obtain:
$$L=\sum_{r\in E_1} c_r A_r \le 2\tau\cdot \opt_\J \mbox{ and } H=\sum_{r\in E_a} c_r A_r^{\alpha_r} \le 2(\e\tau\alpha)^\alpha\cdot \opt_\J,$$
where $\{A_r\}_{r\in E'}$ denote  the final loads in the algorithm. For each $e\in E$, note that $A_{e_1}=A_{e_a}$; we use $A_e$ to denote this load. As every weight is at least one, we have each $A_e\in \{0\}\cup \R_{\ge 1}$. The objective value in the original instance \I is
\begin{align}
\sum_{e\in E} f_e(A_e) &\le \sum_{e\in E} \left( \max\{q_e,1\}\cdot \xi_e q_e^{\alpha_e-1} \cdot A_e + \xi_e \cdot A_e^{\alpha_e}\right)\label{eq:combine-0} \\
&=\sum_{e\in E} \left( \max\{q_e,1\}\cdot c_{e_1}\cdot A_e + c_{e_a}\cdot A_e^{\alpha_e}\right)\le \max\{q,1\}\cdot L + H \label{eq:combine-1} \\
&\le 2\left(\max\{q,1\}\tau+ (\e\tau\alpha)^\alpha\right)\cdot \opt_\J\le 4\left(\max\{q,1\}\tau+ (\e\tau\alpha)^\alpha\right)\cdot \opt_\I.\label{eq:combine-2}
\end{align}
Inequality \eqref{eq:combine-0} is by Lemma~\ref{lem:fn-apx} (2nd inequality) and $A_e\in \{0\}\cup \R_{\ge 1}$.
In \eqref{eq:combine-1}, the equality is by definition of the scalars $c_r$ and the inequality is by definition of $L$ and $H$. In \eqref{eq:combine-2}, the first  inequality is by the above bounds on $L$ and $H$, and the last inequality  uses $\opt_\J \le 2\cdot \opt_\I$. This completes the proof of Theorem~\ref{thm:main-1}.

\paragraph{Remark:} The requirement that every weight is at least one is crucial in obtaining our result. As noted earlier, this requirement also appears in all prior work, e.g. \cite{AndrewsAZZ12,AndrewsAZ16,AntoniadisIKMNP20,MakarychevS18,EmekKLS20}. In fact,  any $r(q)$ competitive ratio for \gnd under arbitrary weights (possibly less than one) leads to an $O(1)$-competitive online algorithm, which is not possible even for the simplest setting of single-commodity flow in edge-weighted undirected graphs.
 To see this, consider a new instance of \gnd with weights $w'_{i,e}=w_{i,e}/q$ and parameters $\sigma'_e=\sigma_e/q^\alpha$ and $\xi'_e=\xi_e$. Note that the new \gnd instance is equivalent to the old one (the objective value of each solution is scaled down by $q^\alpha$). Moreover, the new value $q'=1$, which means that we have an $r(q')=O(1)$ competitive algorithm. 
\paragraph{REP cost functions} We now consider the \gnd problem under more general costs of the form~\eqref{eq:REP} and prove Theorem~\ref{thm:main-3}. The main idea is to replace each resource $e\in E$ with $q$ {\em copies} $e_1,\cdots e_q$ each with (D)oS cost function of the usual form~\eqref{eq:energy}. Then, we will directly apply Theorem~\ref{thm:main-1}. 

For each $e\in E$, (by renumbering if needed) let $$\left(\frac{\sigma_e}{\xi_{e,1}}\right)^{1/\alpha_{e,1}}\,\, =\,\, \min_{j=1}^q  \left(\frac{\sigma_e}{\xi_{e,j}}\right)^{1/\alpha_{e,j}}.$$   
The new \gnd instance has resources $\bar{E}:=\{e_j : j\in [q], e\in E\}$. For each $e\in E$, set 
$$\sigma_{e_j}:= \left\{ \begin{array}{ll}
\sigma_e     &  \mbox{ if }j=1\\
     0 & \mbox{ if } j=2,\cdots q
\end{array}\right., \mbox{ and } \alpha_{e_j}:=\alpha_{e,j}, \xi_{e_j}:=\xi_{e,j} \mbox{ for all } j\in [q].$$ Let $f_{e_j}(x)$ denote the (D)oS cost function for each $e_j\in \bar{E}$. Clearly, $\bar{f}_e(x) = \sum_{j=1}^q f_{e_j}(x)$ for all $x\ge 0$ and $e\in E$. Moreover,$$q:=\max_{f\in \bar{E}} \left(\frac{\sigma_f}{\xi_f}\right)^{1/\alpha_f} = \max_{e\in E} \min_{j\in [q]} \left(\frac{\sigma_e}{\xi_{e,j}}\right)^{1/\alpha_{e,j}} = Q.$$
Recall the definition of $Q$ in Theorem~\ref{thm:main-3}. For each request $i$, the new reply-collection  is $$\bar{\P}_i:=\{\cup_{e\in p} \{e_1,\cdots e_q\} \,:\, p\in \P_i\},$$
i.e., each new reply corresponds to selecting all copies of the resources in some original reply $p$. Assuming a $\tau$-approximation algorithm for the min-cost oracle under $\P_i$, it is easy to obtain a $\tau$-approximation algorithm for the new min-cost oracle under $\bar{\P}_i$. Indeed, given costs $d:\bar{E}\rightarrow \R_+$ we define costs $d' : E\rightarrow \R_+$ as $d'_e:=\sum_{j=1}^q d_{e_j}$ for each $e\in E$ and apply the oracle for $\P_i$. 

Using Theorem~\ref{thm:main-1} on the new \gnd instance, we obtain an $O(q\tau+(\e \alpha\tau)^\alpha) = O(Q\tau+(\e \alpha\tau)^\alpha)$ competitive online algorithm under REP cost functions. This proves Theorem~\ref{thm:main-3}. The runtime of this algorithm is $O(Nmq + N\Phi(mq))$ where $\Phi(\cdot)$ denotes the time taken by the min-cost oracle.

\section{Lower Bounds}\label{sec:LB}
\def\ssr{\ensuremath{\mathsf{SSR}}\xspace}

We now show that our competitive ratio is tight up to a constant factor and prove Theorem~\ref{thm:main-2}.

We consider the single commodity routing problem (\ssr) in directed graphs, which is a special case of \gnd. We are given a directed graph $(V,E)$ with weight  $c_e\ge 0$ associated with each edge $e\in E$. There is a common source $s\in V$ and each online request $i$ corresponds to routing unit flow from $s$ to a sink node $t_i\in V$. The edge cost function of each edge is $f_e(x) = c_e\cdot f(x)$ where
$$f(x) = \left\{ \begin{array}{ll}
0 &\mbox{ if }x=0\\
\sigma+x^\alpha&\mbox{ if }x>0
\end{array} 
\right..
$$
Note that $q=\sigma^{1/\alpha}$. The min-cost reply oracle corresponds to shortest path: so we also have a polynomial time exact oracle in this case. We provide two different instances of \ssr that show lower bounds of (i) $\Omega(q)$ for every choice of $\alpha\ge 1$  and $\sigma\ge 0$, and (ii) $\Omega((1.44\alpha)^\alpha)$ even when $q=0$.

The $\Omega((1.44\alpha)^\alpha)$ lower bound follows from the restricted-assignment scheduling problem with $\ell_p$-norm of loads~\cite{Caragiannis08}. Recall, in that problem there are $m$ machines and $N$ jobs arrive over time. Each job $i$ specifies a subset $M_i$ of machines and needs to be assigned to one of them. The objective is to minimize the sum of $p^{th}$ powers of the machine loads. This corresponds to the directed graph on nodes $\{s\}\cup\{u_e\}_{e=1}^m\cup \{v_i\}_{i=1}^N$ where $s$ is the source, $u$-nodes correspond to machines and $v$-nodes correspond to jobs. There is an edge from $s$ to each $u_e$ with weight $1$. For each job $i\in [N]$ and machine $e\in M_i$, there is an edge from $u_e$ to $v_i$ of weight zero. We also set $\alpha=p$ and $\sigma=0$. The resulting \ssr instance is clearly equivalent to the scheduling problem. 

The $\Omega(q)$ lower bound uses a construction similar to the lower bound for online directed Steiner tree~\cite{FaloutsosPS02}. Fix any value of $\sigma>0$ and $\alpha\ge 1$ (which also fixes $q$). We will show an $\Omega(q)$ lower bound for \ssr instances with this value of $\sigma$ and $\alpha$. The graph $G$ consists of a complete binary tree $B$ of depth $q$ rooted at node $t$ with all edges directed towards  $t$, and source $s$ with edges to all nodes of the tree. Let $S$ denote all the edges out of $s$. All edges of the binary tree have weight zero and all edges in $S$ have weight one.  The input sequence consists of $q$ requests as follows. At any point in the algorithm, let $A$ denote all edges that carry flow at least one: so the current cost is at least $|A\cap S|\cdot \sigma$. The first sink $t_1=t$. For $i=2,\cdots q$, sink $t_i$ is chosen to be the child of $t_{i-1}$ in $B$ such that $A$ does not contain an $s-t_i$ path. It is clear that $|A\cap S|\ge q$ at the end of this request sequence. So the online cost is at least $q\sigma$. Note that the sinks $t_1, \cdots t_q$ lie on a single directed path in the tree $B$: so an offline  solution can just select the edges $(s,t_q)$ followed by $(t_i, t_{i-1})$ for $i=q,\cdots 2$. The cost of this solution is at most $\sigma+q^\alpha=2\sigma$ as it uses only one edge in $S$ (which carries flow of $q$). Thus, the competitive ratio is at least $q/2$. 

{\small 
\bibliographystyle{alpha}
\bibliography{online-GND}
}

\appendix

\section{Deriving the Dual Program}\label{app:derive-dual}
\def\we{\ensuremath{\mathsf{w}_e}\xspace}
The primal program $(P)$ is:
\begin{align*}
  \min\quad&\sum_{e\in E} c_e\cdot \left( \sum_{i=1}^N w_{i,e} \sum_{p\in \P_i: e\in p} x_{i,p}\right)^{\alpha_e}\\
  \mbox{s.t.}\quad& \sum_{p\in \P_i}x_{i,p} \ge 1,\qquad \forall i\in [N] \\
  &\mathbf{x} \ge  \mathbf{0}.
\end{align*}
Let $k=\sum_{i=1}^N|\P_i|$ denote the number of variables in this convex program. Let $A \in \R_{N\times k}$ denote the constraint matrix for the covering constraints; note that $A$ is block-diagonal with
$$a_{j, (i,p)}=\left\{\begin{array}{ll}
1 &\mbox{ if }i=j\\
0 & \mbox{ if } i\ne j
\end{array}\right., \qquad \forall j\in [N],\, i\in [N],\, p\in \P_i.$$
For simpler notation, let $\we\in \R^k$ denote the vector with $\we(i,p)=w_{i,e}$ if $p\ni e$ (and $0$ otherwise) for all  
$p\in \P_i$ and $i\in [N]$. So $\we^T x = \sum_{i=1}^N w_{i,e} \sum_{p\in \P_i: e\in p} x_{i,p}$. Define functions
$$g_e(x) := c_e\cdot \left(\we^T x\right)^{\alpha_e},\,\forall e\in E, \quad \mbox{ and } \quad g(x) := \sum_{e\in E} g_e(x).$$
Note that $g(x)$ is the objective in our convex program.  Letting $\{y_i\}_{i=1}^N$ denote the Lagrange multipliers for the covering constraints, we obtain the Lagrangian function:
\[L(y)=L(y_1, ...y_N) = \inf_{x \geq 0} \left( g(x) + y^T (\mathbf{1} - Ax)\right) = y^T \mathbf{1} - g^*(A^Ty),\]
where $g^*(\mu) := \sup_{x\ge 0} (\mu^T x - g(x))$ is the Fenchel conjugate of $g(x)$. The dual program is then:
\begin{equation}
\label{eq:app-dual}
\sup_{y\ge 0} L(y) = \sup_{y\ge 0} \left(y^T \mathbf{1} - g^*(A^Ty)\right).\end{equation}
Note that we always have $\mu:=A^Ty\ge 0$ as $y\ge 0$ and $A$ is non-negative.

As $g(x) = \sum_{e \in E} g_e(x)$, we can use the Moreau-Rockafeller formula to compute \[g^*(\mu) = \bigoplus_{e\in E} g_e^*(\mu) = \inf_{\sum_{e } \mu_e = \mu} \sum_{e\in E}g_e^*(\mu_e),\]
where $\oplus$ is the infimal convolution. 
We now compute the conjugate functions for $g_e$s depending on whether $\alpha_e>1$ or $\alpha
_e=1$. Recall that $E_1=\{e\in E:\alpha_e=1\}$.

\begin{cl}\label{cl:conj>1} For any $e\in E$ with $\alpha_e>1$ and $\mu\ge 0$,
$$g_e^*(\mu) =  
c_e(\alpha_e-1)\left(\frac{\lambda}{c_e\alpha_e}\right)^{\beta_e}, \quad \mbox{ where }\lambda = \max_{i,p} \frac{\mu(i,p)}{\we(i,p)} \mbox{ and }\beta_e=\frac{\alpha_e}{\alpha_e-1}.$$
Above, we treat $0/0=0$. Also, $g^*_e(\mu)=\infty$ if $\mu(i,p)>0$ for any $(i,p)$ with $\we(i,p)=0$. \end{cl}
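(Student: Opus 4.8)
The plan is to compute the Fenchel conjugate $g_e^*(\mu) = \sup_{x\ge 0}\bigl(\mu^T x - c_e(\we^T x)^{\alpha_e}\bigr)$ directly. The key observation is that $g_e$ depends on $x$ only through the single scalar $s:=\we^T x \ge 0$: the objective is linear in the ``direction'' of $x$ and curved only along $\we$. So I would first argue that the supremum over $x\ge 0$ with $\we^T x = s$ fixed is $s\cdot\lambda - c_e s^{\alpha_e}$, where $\lambda := \max_{(i,p)} \mu(i,p)/\we(i,p)$ is the best ``rate of return per unit of $\we$-load''; this is just the observation that to maximize $\mu^T x$ subject to $\we^T x = s$ and $x\ge 0$ one puts all mass on the coordinate with the largest ratio. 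This step is where the two degenerate cases appear: if some coordinate $(i,p)$ has $\we(i,p)=0$ but $\mu(i,p)>0$, then increasing $x_{i,p}$ raises $\mu^T x$ without changing $s$, so the sup is $+\infty$; and the convention $0/0=0$ handles coordinates where both vanish (they contribute nothing and should be ignored in the $\max$).

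Next, having reduced to a one-dimensional problem, I would maximize $\phi(s) := \lambda s - c_e s^{\alpha_e}$ over $s\ge 0$. Since $\alpha_e>1$ and $c_e,\lambda\ge 0$, $\phi$ is concave on $[0,\infty)$ with $\phi(0)=0$, so the maximizer is at the stationary point $\lambda = c_e\alpha_e s^{\alpha_e-1}$, i.e. $s^* = \bigl(\tfrac{\lambda}{c_e\alpha_e}\bigr)^{1/(\alpha_e-1)}$. Plugging back in and simplifying with $\beta_e = \alpha_e/(\alpha_e-1)$ (so that $\tfrac{1}{\alpha_e-1} = \beta_e - 1$ and $\alpha_e/(\alpha_e-1) = \beta_e$), one gets
\[
\phi(s^*) \;=\; c_e s^{*\,\alpha_e}\bigl(\alpha_e - 1\bigr) \;=\; c_e(\alpha_e-1)\left(\frac{\lambda}{c_e\alpha_e}\right)^{\beta_e},
\]
which is exactly the claimed formula. (If $\lambda = 0$ the maximizer is $s^*=0$ and both sides are zero, consistent with $0/0=0$; if $c_e=0$ then $\alpha_e>1$ cannot occur under our assumptions, or one treats it separately as trivial.)

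The only genuinely delicate point — and the step I expect to require the most care in writing up — is the interchange/justification in the first step: that the unconstrained $\sup_{x\ge 0}$ really does decompose as $\sup_{s\ge 0}\sup_{x\ge 0:\,\we^Tx=s}$, and that the inner sup is attained (or approached) with the stated value even though $\we$ may have zero entries and $x$ ranges over an unbounded set. Here I would simply note that $\mu^T x = \sum_{(i,p)}\mu(i,p)x_{i,p} \le \lambda\sum_{(i,p)}\we(i,p)x_{i,p} = \lambda\, s$ coordinatewise whenever $\we(i,p)>0$ on the support of $\mu$ (and $\le$ fails, giving $+\infty$, otherwise), with equality achievable by concentrating mass on an argmax coordinate; combined with the finite-dimensional concave maximization of $\phi$, this yields the formula. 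Everything else is routine algebra with the conjugate exponent $\beta_e$.
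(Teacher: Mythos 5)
Your proof is correct and follows essentially the same route as the paper: reduce the maximization of $\mu^T x - c_e(\we^T x)^{\alpha_e}$ over $x\ge 0$ to a one-dimensional problem in $s=\we^T x$, observing that the best ratio $\lambda=\max_{(i,p)}\mu(i,p)/\we(i,p)$ governs the linear part, then solve $\sup_{s\ge 0}(\lambda s - c_e s^{\alpha_e})$ by calculus. The paper phrases the reduction as an exchange argument showing an optimal $x$ (if it exists) can be taken to have a single nonzero coordinate, while your nested-sup decomposition $\sup_{s\ge 0}\sup_{\we^T x=s,\,x\ge 0}$ reaches the same one-dimensional problem and arguably handles the unbounded/degenerate cases a bit more cleanly, but the underlying idea and the final computation are the same.
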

\begin{proof}
Note that $g_e^*(\mu) = \sup_{x\ge 0} \left(\mu^T x - c_e(\we^Tx)^\alpha_e\right)$. For simpler notation, we use $h$ to index the coordinates $(i,p)$ in the vectors $x$ and $\we$.  

We first show that if the supremum in $g_e^*(\mu)$ is achieved, there is an optimal $x$ with at most one positive coordinate. Suppose, that $x$ achieves the supremum in $g_e^*(\mu)$ and has $x_1,x_2>0$. Without loss of generality, say $\frac{\mu(1)}{\we(1)} \ge  \frac{\mu(2)}{\we(2)}$. Then, for some $\epsilon>0$, consider the new solution $x'$ with $x'_1=x_1+\epsilon \we(2)$, $x'_2=x_2-\epsilon \we(1)$  and  $x'_h=x_h$ for other coordinates $h$. The increase in the objective is $\epsilon (\mu(1)\we(2) - \mu(2)\we(1)) \ge 0$. So we can choose an  $\epsilon>0$ such that $x'_2=0$ and the objective at $x'$ is at least as much as at $x$. Repeating this process, we will end up with a solution with at most one positive coordinate, as claimed. 

By the above argument, we can assume that an optimal $x$ (if any) will only have positive value on the coordinate $h$ that maximizes $\mu(h)/\we(h)$. By renumbering let $h=1$ denote this coordinate. Then $g_e^*(\mu) = \sup_{x_1\ge 0} \left(\mu(1)\cdot x_1 - c_e (\we(1)\cdot x_1)^{\alpha_e}\right)$. By simple calculus, if $\we(1)=0$ and $\mu(1)>0$ then $g_e^*(\mu)=\infty$; otherwise 
the supremum is achieved and $g_e^*(\mu)=c_e(\alpha_e-1)\left(\frac{\mu(1)}{\we(1) c_e\alpha_e}\right)^{\frac{\alpha_e}{\alpha_e-1}}$. The claim follows by definition of $\lambda$ and $\beta_e$.
\end{proof}

Using Claim~\ref{cl:conj>1}, for any $e\in E\setminus E_1$ we can re-write:
\begin{equation}\label{eq:app-dual-1}
g^*_e(\mu_e) = \min \left\{ \frac{c_e\alpha_e}{\beta_e} \left( \frac{\lambda_e}{c_e\alpha_e}\right)^{\beta_e} \,:\, \we(i,p)\cdot \lambda_e \ge \mu_e(i,p) \mbox{ for all }(i,p)\right\},
\end{equation}
where an infeasible problem has value $\infty$. 

\begin{cl}\label{cl:conj=1} For any $e\in E$ with $\alpha_e=1$ and $\mu\ge 0$,
$$g_e^*(\mu) = \left\{ \begin{array}{ll}
0& \mbox{ if }\mu(i,p) \le c_e\cdot \we(i,p) \mbox{ for all } i\in [N] \mbox{ and } p\in \P_i\\
\infty & \mbox{ otherwise}.
\end{array}\right..$$
\end{cl}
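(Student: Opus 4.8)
The plan is to compute the Fenchel conjugate $g_e^*(\mu) = \sup_{x \ge 0}\left(\mu^T x - c_e\cdot \we^T x\right)$ directly for a resource $e$ with $\alpha_e = 1$, noting that the objective is now \emph{linear} in $x$, namely $\sum_h \left(\mu(h) - c_e \we(h)\right) x_h$. The supremum of a linear function over the nonnegative orthant is either $0$ (attained at $x = 0$) when every coefficient is nonpositive, or $+\infty$ (by sending a coordinate with positive coefficient to infinity) otherwise.

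Concretely, I would argue: if $\mu(h) \le c_e \we(h)$ for every coordinate $h = (i,p)$, then each term $(\mu(h) - c_e\we(h)) x_h \le 0$ for $x \ge 0$, so the sup is at most $0$; and taking $x = 0$ shows it equals $0$. Conversely, if there is some coordinate $h_0$ with $\mu(h_0) > c_e\we(h_0)$, then choosing $x$ to be $t \cdot \mathbf{e}_{h_0}$ (the $h_0$-th unit vector scaled by $t > 0$) and letting $t \to \infty$ drives the objective to $+\infty$, so $g_e^*(\mu) = \infty$. This is exactly the claimed dichotomy, since $\we(i,p) = w_{i,e}$ when $e \in p$ and $\we(i,p) = 0$ otherwise, so the condition $\mu(i,p) \le c_e \we(i,p)$ for all $i,p$ is precisely what the claim states.

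I do not anticipate any real obstacle here — this is the degenerate ($\alpha_e = 1$) analogue of Claim~\ref{cl:conj>1}, and it is strictly easier because there is no calculus to do: the linearity of $\we^T x$ in $x$ makes $g_e(x)$ itself linear, so its conjugate is the indicator function of a box (the set $\{\mu : \mu \le c_e \we\}$, intersected with $\mu \ge 0$, which is automatic here since $\mu = A^T y \ge 0$). The only minor care needed is to observe that we do not even need the "reduce to one positive coordinate" argument from the proof of Claim~\ref{cl:conj>1}: for a linear objective over $\R_{\ge 0}^k$ each coordinate can be optimized independently. Once this claim is in hand, combined with \eqref{eq:app-dual-1} and the Moreau--Rockafellar infimal-convolution formula for $g^* = \bigoplus_e g_e^*$, one recovers the dual program $(D)$ stated in Section~\ref{sec:frac}: the $e \in E_1$ terms contribute the constraints $z_e \le 1$ (after the appropriate change of variables identifying $\lambda_e$ with $c_e \alpha_e z_e$) rather than an objective term, while the $e \in E \setminus E_1$ terms contribute the $\frac{c_e \alpha_e}{\beta_e} z_e^{\beta_e}$ penalty.
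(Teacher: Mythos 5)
Your proof is correct and follows essentially the same route as the paper: compute the conjugate of the now-linear $g_e$, observe that the supremum over $\R_{\ge 0}^k$ is $0$ exactly when every coefficient $\mu(h) - c_e\we(h)$ is nonpositive, and diverges to $+\infty$ by scaling a single favorable coordinate otherwise. The extra observations you make (linearity, coordinate-wise optimization) are natural elaborations of the paper's terser argument, not a different method.
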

\begin{proof}
Note that $g_e^*(\mu) = \sup_{x\ge 0} \left(\mu^T x - c_e(\we^T x)\right)$. Again, we use $h$ to index the coordinates $(i,p)$ in the vectors $x$ and $\we$.  
First, suppose $\mu(h)>c_e\cdot \we(h)$ for some $h$ (say $h=1$). Then, setting $x_h=0$ for $h\ne 1$ and $x_1\rightarrow \infty$ we obtain $g_e^*(\mu)=\infty$. 

Now, we assume $\mu \le c_e\cdot \we$. Then, it is clear that $g_e^*(\mu)=0$.
\end{proof}
 Using Claim~\ref{cl:conj=1}, for $e\in E_1$, we have: 
\begin{equation}\label{eq:app-dual-2}
g^*_e(\mu_e) = \min \left\{ 0 \,:\, \we(i,p)\cdot \lambda_e \ge \mu_e(i,p) \mbox{ for all }(i,p),\, \mbox{ and }\lambda_e\le c_e\right\},
\end{equation}
where an infeasible problem has value $\infty$.

By Claims~\ref{cl:conj>1} and \ref{cl:conj=1}, we know that $g^*_e(\mu)$ is non-decreasing in $\mu$. So we can write:
\[g^*(\mu) =    \inf_{\sum_{e } \mu_e \ge \mu} \sum_{e\in E}g_e^*(\mu_e).\]
Combined with \eqref{eq:app-dual-1} and \eqref{eq:app-dual-2}, we have:
 \begin{align*}
  g^*(\mu) = \quad &\min\quad \sum_{e\in E\setminus E_1}  \frac{c_e\alpha_e}{\beta_e} \left( \frac{\lambda_e}{c_e\alpha_e}\right)^{\beta_e} \\
  &\mbox{s.t.}\quad  \sum_{e\in E} \lambda_e \cdot \we(i,p) \ge \mu(i,p),\qquad \forall i\in [N],\, p\in \P_i, \\
   &\qquad\lambda_e\le c_e,\qquad \forall e\in E_1,\\
   &\qquad\mathbf{\lambda} \ge  \mathbf{0}.
\end{align*}
Combining this with \eqref{eq:app-dual}, using $\mu=A^T y$ and the definition of matrix $A$, the dual program is:
 \begin{align*}
 \max&\quad \sum_{i=1}^N y_i \, -\, \sum_{e\in E\setminus E_1}  \frac{c_e\alpha_e}{\beta_e} \left( \frac{\lambda_e}{c_e\alpha_e}\right)^{\beta_e} \\
   \mbox{s.t.}&\quad  \sum_{e\in E} \lambda_e \cdot \we(i,p) \ge y_i,\qquad \forall i\in [N],\, p\in \P_i, \\
   &\qquad\lambda_e\le c_e,\qquad \forall e\in E_1,\\
   &\qquad\mathbf{\lambda}, \mathbf{y}  \ge  \mathbf{0}.
\end{align*}
Using  new variables $z_e=\lambda_e/(c_e\alpha_e)$ we obtain the dual program $(D)$ as described in \S\ref{sec:frac}.

\end{document}